\newtheorem{thm}{Theorem}
\newtheorem{cor}{Corollary}
\newtheorem{definition}{Definition}
\newtheorem{lem}{Lemma}
\newtheorem{asmn}{Assumption}
\newtheorem{prpty}{Property}
\newtheorem{defi}{Definition}
\newtheorem{props}{Proposition}
\newtheorem{exx}{Example}
\newtheorem{remm}{Remark}
\newtheorem{prop}{Property}
\newtheorem{proof}{Property}
\renewenvironment{proof}{\smallskip \noindent{\textbf{Proof:}}}{\hfill \hspace*{1pt}\hfill $\blacksquare$ \newline \noindent}
\newenvironment{proofname}[1]{\smallskip \noindent {\textbf{#1}}}{\hfill \hspace*{1pt}\hfill $\blacksquare$ \newline \noindent}
\newenvironment{assumption}{\begin{asmn}\rm }{\hfill \hspace*{1pt} \hfill $\square$\end{asmn}}
\newenvironment{assumption2}{\begin{asmn}\rm }{\hfill \hspace*{1pt}\end{asmn}}
\newenvironment{remark}{\begin{remm}\rm }{\hfill \hspace*{1pt} \hfill $\square$\end{remm}}
\newenvironment{example}{\begin{exx}\rm }{\hfill \hspace*{1pt} \hfill $\square$\end{exx}}
\newenvironment{proposition}{\begin{props}\rm }{\hfill \hspace*{1pt} \hfill $\square$\end{props}}
\newenvironment{lemma}{\begin{lem}\rm }{\hfill \hspace*{1pt} \hfill $\square$\end{lem}}
\newenvironment{theorem}{\begin{thm}\rm }{\hfill \hspace*{1pt} \hfill $\square$\end{thm}}
\newenvironment{definition2}{\begin{defi}\rm }{\hfill \hspace*{1pt} \end{defi}}
\newcommand{\dom}{\ensuremath{\mathrm{dom}}\,}
\newcommand{\diag}{\ensuremath{\mathrm{diag}}}
\newcommand{\K}{\ensuremath{\mathcal{K}}}
\newcommand{\KL}{\ensuremath{\mathcal{KL}}}
\newcommand{\real}{\ensuremath{{\mathbb R}}}
\newcommand{\complex}{\ensuremath{{\mathbb C}}}
\newcommand{\integer}{\ensuremath{{\mathbb Z}}}
\newcommand{\sign}{\ensuremath{\mathrm{sign}}}
\begin{document}
\title{Finite-time stability properties of Lur'e systems with piecewise continuous nonlinearities\thanks{Work supported by the ANR under grant HANDY  ANR-18-CE40-0010.}}
\author{S. Mariano\thanks{S. Mariano and R. Postoyan are with the Universit\'e de
	Lorraine, CNRS, CRAN, F-54000 Nancy, France (e-mails: firstname.name@univ-lorraine.fr).}, R. Postoyan\footnotemark[2], L. Zaccarian\thanks{L. Zaccarian is with LAAS-CNRS, Universit\'e de Toulouse, CNRS, Toulouse, France
	and the Department of Industrial Engineering, University of Trento, Trento, Italy (e-mail: zaccarian@laas.fr).}
	}
\maketitle

\begin{abstract}
We analyze the stability properties of Lur'e systems with piecewise continuous nonlinearities by exploiting the notion of set-valued Lie derivative for Lur'e-Postnikov Lyapunov functions. We first extend an existing result of the literature to establish the global asymptotic stability of the origin under a more general sector condition. We then present the main results of this work, namely additional conditions under which output and state finite-time stability properties also hold for the considered class of systems. We highlight the relevance of these results by certifying the stability properties of two engineering systems of known interest:  mechanical systems affected by friction and cellular neural networks.
\end{abstract}
 

\section{Introduction}
\label{sec:introduction-introduction}

Defining conditions to ensure stability properties of continuous-time linear systems subject to a cone-bounded nonlinear output feedback, namely, the so-called the Lur'e problem, has been widely investigated in the literature see, e.g., \cite{LurePostnikov, popov1961absolute, NonlinearKhalil,yakubovich2004stability}. 
This class of systems is ubiquitously used in various engineering domains, such as mechanical engineering to describe dynamical systems affected by friction and/or unilateral constraints \cite{Eindhoven17}, electrical and electronic engineering to capture the behavior of electrical circuits with switches or electronic devices \cite{acary2008numerical,vasca2009new}, or neural networks \cite{soykens1999lur}; see \cite{brogliato2020dynamical} for additional examples. However, to the authors' best knowledge, very few results are available on the finite-time stability properties of Lur'e systems, see \cite{tang2017finite}, which concentrates on cluster synchronization of networks of Lur'e systems. 
Finite-time stability properties are gaining increasing attention due to their relevance in many applications such as high-order sliding mode algorithms \cite{polyakov2015finite}, controllers for mechanical systems \cite{bartolini2003survey}, spacecraft stabilization \cite{vorotnikov2002partial}, observer design problems \cite{andrieu2019lmi}; see \cite{vorotnikov1998partial} for additional examples. There is therefore a need for analytical tools to establish finite-time stability properties for this class of systems. In this context,  we investigate the output and state finite-time stability properties of Lur'e system with piecewise continuous nonlinearities.

Historically,  two different types of Lyapunov functions have been used to analyze the (absolute) stability of continuous-time Lur'e systems: quadratic functions of the state and the so-called Lur'e-Postnikov Lyapunov functions, which are the sum of a quadratic function of the state and a weighted sum of the integrals of the feedback nonlinearities \cite{NonlinearKhalil}. Lur'e-Postnikov Lyapunov functions are generally used to draw less conservative sufficient stability conditions \cite{yakubovich2004stability}. However, when the nonlinearities are piecewise continuous, as in, e.g., mechanical systems \cite{Eindhoven17}, neural networks \cite{forti2003global}, see also \cite{brogliato2020dynamical}, the challenge is that Lur'e-Postnikov Lyapunov functions become only differentiable almost everywhere (being locally Lipschitz continuous) due to the discontinuity points of the nonlinearities. Indeed, when the system nonlinearities are piecewise continuous and a Lur'e-Postnikov Lyapunov function is considered,  the standard tools used in the nonsmooth analysis, like Clarke's generalized directional derivatives, may lead to conservative algebraic Lyapunov conditions as we show in this paper; see also \cite{DellaRossaPhD21}. This limitation is overcome in \cite{Eindhoven17}, where trajectory-based arguments are used to prove an input-to-state (ISS) stability property, but no finite-time stability property is provided.

In this work, we first extend one of the results in \cite{Eindhoven17} to establish the global asymptotic stability of the origin for Lur'e systems with piecewise continuous nonlinearities under a more general sector condition. We resort for this purpose to a nonsmooth Lur'e-Postnikov Lyapunov function. We present algebraic Lyapunov decrease conditions by using the notion of set-valued Lie derivative \cite{BacCer99,Valadier89}. The set-valued Lie derivative is the key to overcoming the conservatism which the customarily used Clarke's generalized directional derivative may give, as we illustrate in a dedicated example. It has to be noted that in 
\cite{tang2017finite} set-valued Lie derivatives are also used in the analysis of these interconnections, however, the Lyapunov function is quadratic (thus continuously differentiable), which, as mentioned above, leads to more conservative conditions and, more importantly, the problem setting is different. Our main results establish  output and state finite-time stability properties for the considered Lur'e systems. To illustrate the usefulness of our results we focus on two engineering applications, considered respectively in \cite{Eindhoven17,forti2003global} and that can be modeled as Lur’e systems. Indeed, we establish output finite-time and state-independent local asymptotic stability properties for mechanical systems subject to friction, which is a novelty compared to \cite{Eindhoven17}. Furthermore, we certify that the cellular neural networks modeled as in \cite{forti2003global} are state finite-time stable, thus retrieving the results in \cite[Thm. 4]{forti2003global} while coping with a more general class of Lur'e systems.

The rest of the paper is organized as follows. Notation and background material are given in Section~II. The class of Lur'e systems under consideration is introduced in Section~III. In Section~IV, we address asymptotic stability characterizations with a novel algebraic Lyapunov proof. Finite-time stability results are given in Section~V, while we discuss applications of these results in  Section~VI. In Section~VII we give conclusions and some perspectives.






\section{Notation} 
\label{sec:notation}
Let $\real$ be the set of real numbers, $\real_{\geq0}:=[0,\infty)$, $\real_{>0}:=(0,\infty)$,  $\integer_{\geq 0}:=\{0,1,\dots\}$, $\integer_{>0}:=\{1,2,\dots\}$ and $\complex:=\{a+\mathsf{i} b: a,b \in \real\}$ with $\mathsf{i}:=\sqrt{-1} $. The notation $\real^n$ stands for the $n$-dimensional Euclidean space with $n\in\integer_{>0}$. The notation $\mathbb{B}_n$ stands for the closed unit ball of $\real^n$ centered at the origin and we write $\mathbb{B}$ when its dimension is clear from the context. We denote with $\emptyset$ the empty set. Given a vector $x\in\real^n$, we denote with $x_{\ell}$ its $\ell$-th element, $\ell \in \{1,\dots, n\}$, and with $|x|$ its Euclidean norm. The notation $\boldsymbol{0}_n$ stands for the vector of $\real^n$,  whose $n\in \integer_{>0}$ elements are all equal to $0$. We use $I_n$ to denote the identity matrix of dimension $n \times n$ with $n\in\integer_{>0}$ while  $O_n$ denotes the null matrix of dimension $n \times n$ with $n\in\integer_{>0}$. Given two vectors $x_1\in\real^n$ and $x_2\in\real^m$ with $n,m\in\integer_{>0}$, we denote $(x_1,x_2):= [x_1^\top x_2^\top]^\top$ for the sake of convenience. Given a matrix $A\in\real^{n \times m}$ with $n,m\in\integer_{>0}$, $A_{\ell}$ stands for its $\ell$-th row where $\ell \in \{1,\dots, n\}$, $|A|$ is its spectral norm while $\ker(A)$ stands for its kernel. The notation $\diag(x_1, \dots, x_n)$ stands for the diagonal matrix of $\real^{n\times n}$ whose $n\in \integer_{>0}$ diagonal elements are $x_1, \dots, x_n\in \real$. We define a symmetric matrix $P\in\real^{n \times n}$ with $n\in\integer_{>0}$ to be positive (negative) definite, i.e., $P>0 \, (P<0)$, if all its eigenvalues are real and positive (negative); we say that $P$ is positive (negative) semidefinite, i.e., $P\geq0 \, (P\leq0)$, if all its eigenvalues are real and  non-negative (non-positive). Given a set $\mathcal{S} \subset \real^n$ with $n\in\integer_{>0}$, $\overline{\text{co}} \, \mathcal{S}$ is its closed convex hull. Given a function $f: X \rightarrow Y$, the domain of $f$ is defined as $\dom f = \{x \in X : f(x)\neq \emptyset\}$. A function $f: \mathcal{X} \rightarrow \real_{\geq 0}$ with $\mathcal{X} \subseteq \real^n $ and $n\in\integer_{>0}$ is radially unbounded  if $f(x) \rightarrow \infty$  as $|x| \rightarrow \infty$. Let $f: \real^n \rightarrow  \real$ and $r \in \real$ with $n\in\integer_{>0}$, we denote by $f(r)^{-1}$ the set $\{x \in \real^n : f(x) = r\}$, which may be empty.   Let $X$ and $Y$ be two non-empty sets,  $T:X \rightrightarrows Y$ denotes a set-valued map from $X$ to  $Y$. We will refer to class $\K$, $\K_\infty$ and $\KL$ functions as defined in \cite[Chap. 3]{TeelBook12}.  Let $f: \real \rightarrow  \real$ and $s_\circ \in \real$, then  $f'(s_\circ):=\lim_{s\rightarrow s_\circ}(f(s)-f(s_\circ))/(s-s_\circ)$, when it exists.
A function $f:\real \to \real$ is  \emph{piecewise continuous} if for any given interval $[a,b]$, with $a<b\in \real$, there exist a finite number of points $a\leq x_0<x_1<x_2<\dots<x_{k-1}<x_{k}\leq b$ with $k\in\integer_{\geq0}$  such that $f$ is continuous on $(x_{i-1},x_i)$ for any $i\in\{1,\dots,k\}$ and its one-sided limits exist as finite numbers. A function $f:\real \to \real$ is \emph{piecewise continuously differentiable} if $f$ is continuous and for any given interval $[a,b]$, with $a<b\in \real$, there exists a finite number of points $a\leq x_0<x_1<x_2<\dots<x_{k-1}<x_{k}\leq b$, with $k\in\integer_{\geq0}$  such that $f$ is continuously differentiable on $(x_{i-1},x_i)$ for any $i\in\{1,\dots,k\}$ and the one-sided limits $\lim_{s\rightarrow x_{i-1}^+} f'(s)$ and $\lim_{s\rightarrow x_{i}^-} f'(s)$ exists for any $i\in\{1,\dots,k\}$. 

\section{Problem statement}
Consider the system of the form
\begin{align}
    \nonumber
	\dot x &= A x + Bu \\
	\label{sys_non_reg}	
	 y&=Cx \\
	 \nonumber
     u&=-\boldsymbol\psi(y),	 
\end{align}
where $x\in \real^n$ is the state, $u, y\in \real^p$ are respectively the input and the output and $A$, $B$ and $C$ are real matrices of appropriate dimensions. The function $\boldsymbol\psi:\real^p \rightarrow \real^p$ is decentralized, namely for any $ y=(y_1,\dots,y_p)\in \real^p$, $\boldsymbol\psi(y)=(\psi_1(y_1), \dots, \psi_p(y_p))$. We suppose that $\psi$ satisfies the next sector condition.
\begin{assumption}
\label{ass:dec_nl}
For any  $i \in \{1, \dots, p\}$, $\psi_i$ is piecewise continuous and there exists $\zeta_i\in(0,+\infty]$ such that
    \begin{equation}
    \label{eq:sec_cond}
    \psi_i(y_i)(\psi_i(y_i)-\zeta_i y_i) \leq 0,    \qquad  \forall y_i \in \real.
    \end{equation}
\end{assumption}
The sector condition \eqref{eq:sec_cond} is more general than the one considered in \cite{Eindhoven17}, that is recovered when $\zeta_i=+\infty$ for all $i \in \{1, \dots, p\}$, in which case \eqref{eq:sec_cond} reads
\begin{equation}
    \label{eq:sec_cond_infty}
    -\psi_i(y_i) y_i \leq 0,    \qquad  \forall y_i \in \real \qquad \forall i \in \{1, \dots, p\}.
\end{equation}   
This generalization allows to derive less conservative stability conditions when the nonlinearities satisfy \eqref{eq:sec_cond} with some finite $\zeta_i$. Assumption~\ref{ass:dec_nl} characterizes a so-called Lur'e system \cite[Ch. 7]{NonlinearKhalil}, \cite{yakubovich2004stability}. 

In view of Assumption~\ref{ass:dec_nl}, system \eqref{sys_non_reg} may have a discontinuous right-hand side. Therefore, when we refer to the solutions to system \eqref{sys_non_reg}, we consider its so-called (generalized) Krasovskii solutions, which coincide with the solutions obtained by the Krasovskii regularization \cite{hajek1979discontinuous} of  \eqref{sys_non_reg}, that is    
\begin{align}
	\label{sys_reg_eq}
	\dot x &\in F(x):= A x - B \boldsymbol\Psi(Cx), \qquad y=Cx,		
\end{align}
where $\boldsymbol\Psi(y)=(\boldsymbol\Psi_1(y_1),\dots,\boldsymbol\Psi_p(y_p))$ is the Krasovskii regularization of  $\boldsymbol\psi$ in \eqref{sys_non_reg},  whose components are defined as $\boldsymbol\Psi_i(y_i):= \bigcap\limits_{s>0} \overline{\text{co}} \, \boldsymbol{\psi_i}(y_i + s \mathbb{B})$, $i\in \{1,\dots,p\}$, for any $y_i \in \real$. Observe that, by Assumption~\ref{ass:dec_nl}, $F$ is outer semicontinuous and locally bounded on $\real^n$ and $F(x)$ is convex for any $x \in \real^n$, thus local existence of solutions to \eqref{sys_reg_eq} is guaranteed by Theorem~3 in \cite[Ch. 2.1]{aubin2012differential}. Moreover, by definition, each $\boldsymbol\Psi_i:\real \rightrightarrows \real$ is set-valued only on a set of isolated points, therefore it is locally integrable: a property that will be exploited in the following.

We analyze the stability properties of system \eqref{sys_reg_eq} in the sequel, thereby ensuring the same stability properties for the Krasovskii solutions of \eqref{sys_non_reg}. As customary in the Lur'e systems literature and as shown in Fig.~\ref{fig:int_sys}, we perform a loop transformation to interpret  system \eqref{sys_reg_eq} as the feedback interconnection of two passive systems.

By following the steps in \cite[Ch. 7.1.2]{NonlinearKhalil} and \cite{Eindhoven17} and by adopting the same mathematical notation found in \cite[Ch. 7]{NonlinearKhalil}, we define the dynamic multiplier with transfer function
\begin{equation}
	\label{eq:dyn_multi_tf_reg}
	\mathcal{M}(s):=I + \Gamma s \quad  \forall s\in \complex,
\end{equation} 
where $\Gamma:=\diag(\gamma_1, \dots, \gamma_p)$ and  $\gamma_1, \dots, \gamma_p >0$ are suitable parameters, as detailed in the sequel. We thus interpret system \eqref{sys_reg_eq} as the feedback interconnection of the linear system $\Sigma_1$
\begin{equation}
    \label{sys_lin_reg}
    \Sigma_1 :\,\,
    \begin{cases}
	\, \dot x \!\!\!\! &= A x + B u \\
	\, \overline y \!\!\!\! & =  (C + \Gamma C A) x + (\Gamma C B + Z) u, 	
	\end{cases}
\end{equation}
where $Z:=\diag(\zeta_1^{-1}, \dots, \zeta_p^{-1})$ with\footnote{When $\zeta_i=+\infty$ we use the convention $\zeta_i ^{-1}=0$.} $\zeta_i \in (0, +\infty]$ in Assumption~\ref{ass:dec_nl}, with the nonlinear system $\Sigma_2$ \begin{equation}
    \label{sys_nonlin_reg}
	\Sigma_2 :\,\,
	\begin{cases}
	\, \dot y \!\!\!\! &= g(y,\overline y):= -\Gamma^{-1} y+ \Gamma^{-1}(\overline y -  Z u)  \\	
	\, u  \!\!\!\! &\in  - \boldsymbol\Psi(y). 	
	\end{cases}
\end{equation}
We assume that system $\Sigma_1$ is strictly passive from $u$ to $\overline y$ with quadratic storage function $ U_1$ defined as 
\begin{equation}
    \label{eq:storage1}
    U_1(x):=\frac{1}{2} x^\top P x, \quad \forall x \in \real^n,
\end{equation}
with $P \in \real^{n\times n}$ symmetric and positive definite, as formalized next.
\begin{assumption}
    \label{ass:strict_pass}
     System $\Sigma_1$ in \eqref{sys_lin_reg} is strictly passive from $u$ to $\overline y$ with storage function $U_1$ in \eqref{eq:storage1} \cite[Def. 6.3]{NonlinearKhalil}, i.e., there exist matrices $\Gamma>0$ diagonal, $P=P^\top>0$ and a scalar $\eta>0$ such that 
	\begin{equation}
	\label{eq:diss_sys1}
	M:=\begin{bmatrix}
	P A + A^\top P + \eta I_n &
	P B - (C + \Gamma C A)^\top \\
	B^\top P -  (C + \Gamma C A)&
	-2Z - \Gamma C B - (\Gamma C B)^\top 
	\end{bmatrix} \leq 0.
	\end{equation}
\end{assumption}
The linear matrix inequality \eqref{eq:diss_sys1} in Assumption~\ref{ass:strict_pass} can be efficiently tested numerically. Several tools are also available in the literature to certify \eqref{eq:diss_sys1}: the Kalman-Yakubovich-Popov lemma \cite[Lemma 6.3]{NonlinearKhalil} and the equivalent conditions given in \cite[Ch. 3.1]{brogliato2007dissipative} for minimal realizations or the results surveyed in \cite[Ch. 3.3]{brogliato2007dissipative} for nonminimal ones, to cite a few.

On the other hand, it can be proven, as clarified later in Remark~\ref{rem:sys2_passive} in Section~\ref{subsec:set_valued Lie}, that $\Sigma_2$ in \eqref{sys_nonlin_reg} is passive from input $\overline y$ to output $-u$, by considering the piecewise continuously differentiable storage function $U_2$ defined as
\begin{equation}
    \label{eq:storage2}
    U_2(y):=\sum_{i=1}^{p} \gamma_i \int_{0}^{y_i} \psi_i(\sigma) d\sigma, \quad \forall y \in \real^p,
\end{equation}
 where $\gamma_i > 0$ for $i\in \{1,\dots,p\}$ are the diagonal elements of $\Gamma$ as defined after \eqref{eq:dyn_multi_tf_reg}. 
\begin{figure}
    \centering
	\includegraphics[width=0.40\textwidth]{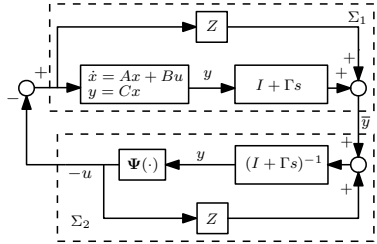} 
	\caption{System \eqref{sys_reg_eq} as the feedback interconnection of  systems $\Sigma_1$ and $\Sigma_2$.}
	\label{fig:int_sys}
\end{figure} 

We are now ready to proceed with the stability analysis of~\eqref{sys_reg_eq}. First, we provide sufficient conditions to ensure global asymptotic stability of the origin for system \eqref{sys_reg_eq} in Section~\ref{sec:asy_stability}. We then analyze its finite-time stability properties in Section~\ref{sec:ftp}.

\section{Asymptotic stability}
\label{sec:asy_stability}
\subsection{Nonsmooth Lur'e-Postnikov Lyapunov functions}
Inspired by \cite{NonlinearKhalil, Eindhoven17} where Lur'e systems with continuous nonlinearities are considered, we characterize the stability of the origin for system \eqref{sys_reg_eq} with a Lur'e-Postnikov Lyapunov function $V$ given by
\begin{align}
		\nonumber
		V(x)&:= U_1(x) + U_2(Cx)\\\label{Lyap_V_original_nl}
		&=\frac{1}{2} x^\top P x +  \sum_{i=1}^{p} \gamma_i \int_{0}^{C_i x} \psi_i(\sigma) d\sigma, \quad \forall x\in \real^n,
\end{align}
where $P$ comes from Assumption~\ref{ass:strict_pass}. Function $V$ is piecewise continuously differentiable, and thus locally Lipschitz, therefore there are points where its gradient is not defined. A standard tool to circumvent this is Clarke's generalized directional derivative, defined for each direction $f\in \real^n$ at each $x\in \real^n$ as \cite[page 11]{Clarke90}
$$V^\circ (x;f):=\max \{\langle v,f\rangle : v \in \partial V(x) \},$$
where $\partial V(x)$  denotes Clarke's generalized gradient of $V$ at $x$ given by
\begin{equation}
\label{eq:ClarkeGradient}
\partial V(x):=\{v \in \real^n| v \in  x^\top P + ({\boldsymbol{\Psi}}(Cx))^\top \Gamma C \}.
\end{equation}
However, the Lyapunov analysis of system~\eqref{sys_reg_eq} using Clarke's generalized directional derivative of $V$ is often too conservative  to establish asymptotic stability of the origin. Roughly speaking, for some $x\in\real^n \setminus\{\boldsymbol{0}_n\}$ there may exist a selection $f_{\text{bad}}\in F(x)$  that is never viable for any solution to \eqref{sys_reg_eq} and such that $V^\circ(x,f_\text{bad})>0$, thereby preventing to prove that the origin of the system is globally asymptotically stable, as illustrated in the next example.
\begin{example}
\label{rem:Clarke_problem_pt1}
  Consider system \eqref{sys_reg_eq} with $n=2$,  $p=1$ (SISO case), 
	\begin{equation*}
	\label{eq:cex}
	A= \begin{bmatrix}
	-1 &
	-1\\
	1&
	-1 
	\end{bmatrix}, \quad
	B= \begin{bmatrix}
	1 \\
	0
	\end{bmatrix}, \quad C= \begin{bmatrix}
	1 &
	0
	\end{bmatrix},
	\end{equation*}
	and where $\boldsymbol\Psi$ is the Krasovskii regularization of $\psi:\real \rightarrow [-\tfrac{1}{4},1]$, defined as  $\psi(s)=1$ if  $s>0$, $\psi(s)=-\tfrac{1}{4}$ if  $s<0$, and $\psi(s)=0$ if $s=0$; hence $\boldsymbol\Psi(0)=[-1/4,1]$. This function $\psi$ satisfies Assumption~\ref{ass:dec_nl} with $\zeta_1=+\infty$ (namely \eqref{eq:sec_cond_infty}). Consider $V$ as in \eqref{Lyap_V_original_nl} with $P=I_2$ and $\Gamma=\gamma_1 = 1$.
  The proposed selection of matrices $A$, $B$, $C$, $P$ and $\Gamma$ and the set-valued map $\boldsymbol{\Psi}$ are such that Assumptions~\ref{ass:dec_nl} and \ref{ass:strict_pass} are satisfied.
 Function \eqref{Lyap_V_original_nl} in this case is given by  $V(x)=\frac{1}{2} (x_1^2 +x_2^2)+ \int_{0}^{x_1} \psi(\sigma) d\sigma$ for any $x \in \real^2$. We have that  $V$ is positive definite and radially unbounded. Furthermore, $V$ is not differentiable at $\{0\}\times\real$. By following \cite[Ch. 4]{clarke2008nonsmooth} as summarized in \cite[Ch. 2.4.2]{DellaRossaPhD21}, to analyze the stability of the origin for the considered system, we study at any $x\in \real^2$ the maximum of $V^\circ(x;f)$ over all allowable directions $f\in F(x)$ with $F$ as in \eqref{sys_reg_eq}. In this regard, consider $x=(0,\frac{1}{2})$,
  \begin{align}
  \nonumber
  \max\limits_{f \in F(0,\tfrac{1}{2})}  &V^\circ((0,\tfrac{1}{2});f)=\max \Big\{ \langle v,f \rangle| \\ 
  \label{eq:Vex}
  &v \in [-\tfrac{1}{4},1] \times \big\{\tfrac{1}{2}\big\}, f \in \big[-\tfrac{3}{2},-\tfrac{1}{4}\big] \times \big\{-\tfrac{1}{2}\big\}\Big\}=\tfrac{1}{8}.
  \end{align}
  With this positive upper bound, in view of \cite[Def. 2.16]{DellaRossaPhD21}, we cannot establish asymptotic stability of the origin\footnote{The Ryan's invariance principle \cite{ryan1998integral} is also not applicable to guarantee asymptotic stability of the origin for this example.} \cite[Thm. 2.18]{DellaRossaPhD21}. Nevertheless a direct inspection shows that $V$ strictly decreases along all solutions outside the origin. The issue is overcome in the following by exploiting the notion of set-valued Lie derivative of $V$ \cite{Valadier89}.
\end{example}

In \cite{Eindhoven17}, the authors overcame the limitations discussed in Example~\ref{rem:Clarke_problem_pt1} by using \emph{trajectory-based} Lyapunov arguments  when Assumption~\ref{ass:dec_nl} holds with $\zeta_i=+\infty$ for any $i\in \{1,\dots,p\}$. In the next theorem, we establish global asymptotic stability of the origin for system \eqref{sys_reg_eq}. Compared to \cite{Eindhoven17}, the result relies on the more general sector condition in \eqref{eq:sec_cond}, and, importantly for the sequel, its proof uses \emph{algebraic} Lyapunov arguments. 
\begin{theorem}
	\label{thm:KLbound}
	Consider system  \eqref{sys_reg_eq} and suppose that Assumptions~\ref{ass:dec_nl} and \ref{ass:strict_pass} hold. Then the origin is GAS, i.e., there exists $\beta \in \KL$ such that all solutions $x$ satisfy 
	\begin{equation}
	|x(t)|\leq \beta(|x(0)|,t), \quad \forall t\in \real_{\geq 0}.
	\label{eq:KL}
	\end{equation} 
\end{theorem}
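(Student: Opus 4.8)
The plan is to prove global asymptotic stability by showing that $V$ in \eqref{Lyap_V_original_nl} is a valid nonsmooth Lyapunov function whose \emph{set-valued Lie derivative} along $F$ is strictly negative away from the origin. First I would verify that $V$ is a legitimate Lyapunov candidate: positive definiteness and radial unboundedness follow because $U_1(x)=\tfrac12 x^\top P x$ is positive definite with $P>0$ from Assumption~\ref{ass:strict_pass}, and because the sector condition \eqref{eq:sec_cond} forces each integral term $\gamma_i\int_0^{C_i x}\psi_i(\sigma)\,d\sigma$ to be nonnegative (the integrand has the same sign as its upper limit). Since each $\psi_i$ is piecewise continuous, $V$ is locally Lipschitz and piecewise continuously differentiable, so the appropriate decrease object is the set-valued Lie derivative
\[
\dot{\widetilde V}_F(x):=\{a\in\real : \exists\, v\in\partial V(x)\text{ with }\langle v,f\rangle=a\ \forall f\in F(x)\},
\]
rather than Clarke's generalized directional derivative, as motivated by Example~\ref{rem:Clarke_problem_pt1}.

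The core computation is to bound $\max \dot{\widetilde V}_F(x)$. I would exploit the passivity decomposition already set up in the excerpt: $V=U_1+U_2$ with $\Sigma_1$ strictly passive (Assumption~\ref{ass:strict_pass}) and $\Sigma_2$ passive (to be justified via Remark~\ref{rem:sys2_passive}). The key point is that for any selection $f=Ax-B\omega$ with $\omega\in\boldsymbol\Psi(Cx)$, and any $v\in\partial V(x)$ of the form $v=x^\top P+\omega^\top\Gamma C$ with the \emph{same} $\omega$, the inner product $\langle v,f\rangle$ can be written, using the storage-function/supply-rate balance, as a quadratic form $\tfrac12 (x,\omega)^\top M (x,\omega)$ plus a sector term. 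Concretely, combining the strict-passivity inequality encoded by $M\leq0$ in \eqref{eq:diss_sys1} with the sector condition \eqref{eq:sec_cond} (which controls the off-diagonal $-2Z$ and coupling blocks), one obtains
\[
\langle v,f\rangle \leq -\tfrac{\eta}{2}\,|x|^2 < 0\qquad\text{for all } x\neq \boldsymbol 0_n.
\]
The essential subtlety is that this must hold for \emph{every} element of $\dot{\widetilde V}_F(x)$, i.e.\ for those $v,f$ built from a \emph{common} regularized value $\omega\in\boldsymbol\Psi(Cx)$; this is precisely where the set-valued Lie derivative outperforms Clarke's derivative, because it discards the spurious cross terms (different $\omega$ in $v$ and in $f$) that produced the bad positive bound $\tfrac18$ in Example~\ref{rem:Clarke_problem_pt1}.

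With the strict bound $\max\dot{\widetilde V}_F(x)\le -\tfrac{\eta}{2}|x|^2<0$ for $x\neq\boldsymbol 0_n$ in hand, I would invoke the nonsmooth LaSalle/invariance and comparison machinery for differential inclusions (e.g.\ the results of \cite{BacCer99} together with \cite[Ch.~4]{clarke2008nonsmooth}) to conclude that $V$ is strictly decreasing along every Krasovskii solution and hence that the origin is globally asymptotically stable. Finally, to upgrade plain asymptotic stability to the explicit $\KL$ estimate \eqref{eq:KL}, I would use the standard equivalence between global asymptotic stability of a compact attractor for a well-posed differential inclusion (outer semicontinuous, locally bounded, convex-valued $F$, as established after \eqref{sys_reg_eq}) and the existence of a $\KL$ bound, as in \cite{TeelBook12}. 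The main obstacle I anticipate is the middle step: rigorously identifying the common-$\omega$ structure of $\dot{\widetilde V}_F(x)$ at the nondifferentiability points $Cx\in\{y:\,y_i=0\text{ for some }i\}$ and verifying the sector term absorbs the coupling blocks of $M$ so that the bound is uniform and strict, including at the discontinuity surfaces.
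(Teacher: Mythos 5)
Your proposal is correct in substance and follows essentially the same route as the paper: the same Lur'e--Postnikov function $V$, the same crucial device of bounding the set-valued Lie derivative by pairing each direction $f=Ax+Bu$, $u\in-\boldsymbol\Psi(Cx)$, with the Clarke-gradient element generated by the \emph{same} $u$ (this is exactly Lemma~\ref{lem:LieDSingleton_Lure}, which the paper proves via Lemma~8 of \cite{KuraMulti21}), and the same algebraic absorption: adding and subtracting $u^\top(Zu+Cx)$ and $\tfrac{\eta}{2}|x|^2$ so that $M\leq 0$ from Assumption~\ref{ass:strict_pass} and the sector inequality $u^\top(Zu+Cx)\leq 0$ (which must be extended from $\psi$ to the convex-valued regularization $\boldsymbol\Psi$, a point you gloss over) yield $\sup\dot{\overline V}_F(x)\leq-\tfrac{\eta}{2}|x|^2$; this is Proposition~\ref{prop:prop_V}. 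Your endgame differs mildly from the paper's: you prove GAS first and then invoke the equivalence between GAS and a $\KL$ bound for well-posed inclusions as in \cite{TeelBook12}, whereas the paper integrates the differential inequality $\dot V\leq-\alpha_3(V)$ directly via a comparison argument following \cite{SontagIOS} to produce $\beta\in\KL$; both are legitimate, since the well-posedness of $F$ (outer semicontinuity, local boundedness, convex values) was established after \eqref{sys_reg_eq}, with the paper's route being somewhat more self-contained.

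One concrete correction is needed. Your displayed definition of the set-valued Lie derivative has the quantifiers swapped: you require the existence of $v\in\partial V(x)$ such that $\langle v,f\rangle=a$ for \emph{all} $f\in F(x)$, whereas the correct object \eqref{eq:LieD} requires the existence of $f\in F(x)$ such that $\langle v,f\rangle=a$ for \emph{all} $v\in\partial V(x)$. This is not cosmetic. The step that connects the algebraic bound to trajectories, namely $\tfrac{d}{dt}V(x(t))\in\dot{\overline V}_F(x(t))$ for almost every $t$, rests on $V$ being \emph{non-pathological}: at almost every $t$ the single vector $f=\dot x(t)\in F(x(t))$ has the same inner product with every element of $\partial V(x(t))$, and that common value equals $\tfrac{d}{dt}V(x(t))$. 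Your swapped set does not capture this fact, so as written there is no supporting theorem for the passage from $\max\dot{\widetilde V}_F(x)<0$ to monotonicity of $t\mapsto V(x(t))$ along Krasovskii solutions. Since your computation in any case bounds the supremum of $\langle v,f\rangle$ over all common-$u$ pairs, it dominates the correct object; the fix is simply to restore definition \eqref{eq:LieD}, to state explicitly that $V$ is non-pathological (the paper cites \cite{KuraMulti21} and \cite{DellaRossaPhD21} for this, where your citation of \cite{BacCer99} only gestures at it), and to handle the possibility $\dot{\overline V}_F(x)=\emptyset$ through the convention $\sup\emptyset=-\infty$.
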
 
The proof of Theorem~\ref{thm:KLbound} is given in Section \ref{subsec:set_valued Lie}, where we use the concept of set-valued Lie derivative that we now recall.

\subsection{Set-valued Lie derivative and its properties}
\label{subsec:set_valued Lie}
The set-valued Lie derivative of $V$ with respect to $F$ in \eqref{sys_reg_eq} at $x \in \real^n$ is defined as \cite{BacCer99} 
\begin{align}
\dot {\overline V}_F(x) := \{ a\in \real | \; \exists f \in F(x):  \langle v,f \rangle = a,\; \forall v\in \partial V(x)\},
\label{eq:LieD}
\end{align}
with $\partial V(x)$ given in \eqref{eq:ClarkeGradient}. Note that $\dot {\overline V}_F(x)$ is a subset of $\{\langle v,f \rangle | v \in \partial V(x), f \in F(x)\}$ and that, by definition, at any $x$ where $V$ is differentiable, so that $\partial V(x)$ is a singleton, this reduces to the set of all standard directional derivatives of $V$ in any direction of $f \in F(x)$. Notice that $\dot {\overline V}_F(x)$ may be the empty set as illustrated later in Example~\ref{rem:Clarke_problem_pt2}. In the next lemma, a useful and intuitive upper bound of the set-valued Lie derivative of $V$ in \eqref{Lyap_V_original_nl} along dynamics \eqref{sys_reg_eq} is provided.
\begin{lemma}
	\label{lem:LieDSingleton_Lure}
Given function $V$ in \eqref{Lyap_V_original_nl} and $F$ in \eqref{sys_reg_eq}, 
	\begin{equation}
	\label{eq:supLieDnn}
		\sup\dot {\overline V}_F(x) \leq\sup\limits_{\substack{u \in - \boldsymbol \Psi(Cx)}}\big ((x^\top P - u^\top \Gamma C)(Ax + Bu) \big ), \quad \forall x\in \real^n,
	\end{equation} 
 where we use the convention $\sup \emptyset = - \infty$ fot the left-hand side when $\sup\dot {\overline V}_F(x)=\emptyset$.
\end{lemma}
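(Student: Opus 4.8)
The plan is to exploit the explicit descriptions of Clarke's generalized gradient $\partial V(x)$ in \eqref{eq:ClarkeGradient} and of the set-valued map $F$ in \eqref{sys_reg_eq}, both of which are parametrized by selections from $\boldsymbol\Psi(Cx)$. Concretely, every element of $\partial V(x)$ can be written as $x^\top P + w^\top \Gamma C$ for some $w \in \boldsymbol\Psi(Cx)$, and every $f \in F(x)$ can be written as $f = Ax - Bw'$ for some $w' \in \boldsymbol\Psi(Cx)$. After the change of variable $u = -w'$, the right-hand side of \eqref{eq:supLieDnn} is exactly $\sup_{w' \in \boldsymbol\Psi(Cx)} (x^\top P + w'^\top \Gamma C)(Ax - Bw')$, so the goal reduces to showing that each element $a$ of $\dot{\overline V}_F(x)$ has this form for some admissible selection $w'$.

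First I would dispose of the degenerate case: if $\dot{\overline V}_F(x) = \emptyset$, the convention $\sup\emptyset = -\infty$ makes \eqref{eq:supLieDnn} hold trivially, since the right-hand side is a supremum over the nonempty set $-\boldsymbol\Psi(Cx)$. So assume $\dot{\overline V}_F(x) \neq \emptyset$ and fix any $a \in \dot{\overline V}_F(x)$. By definition \eqref{eq:LieD}, there exists a single direction $f \in F(x)$, say $f = Ax - Bw'$ with $w' \in \boldsymbol\Psi(Cx)$, such that $\langle v, f\rangle = a$ for \emph{every} $v \in \partial V(x)$.

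The crucial step, and the only genuine subtlety, is the choice of which subgradient to test against $f$. Since $\langle v, f\rangle$ takes the common value $a$ for all $v \in \partial V(x)$, I would evaluate it at the particular subgradient $v^\star := x^\top P + w'^\top \Gamma C$ generated by the \emph{same} selection $w'$ that produces the velocity $f$. As $w' \in \boldsymbol\Psi(Cx)$, this $v^\star$ indeed belongs to $\partial V(x)$, whence $a = \langle v^\star, f\rangle = (x^\top P + w'^\top \Gamma C)(Ax - Bw')$. Substituting $u := -w' \in -\boldsymbol\Psi(Cx)$ rewrites this as $a = (x^\top P - u^\top \Gamma C)(Ax + Bu)$, which is precisely one of the terms over which the right-hand side of \eqref{eq:supLieDnn} takes its supremum; hence $a$ is bounded above by that supremum. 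Taking the supremum over all $a \in \dot{\overline V}_F(x)$ then yields \eqref{eq:supLieDnn}.

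I expect the only delicate point to be the justification of this matching-selection argument together with the observation that the inequality need not be an equality: the right-hand side ranges over all selections $u \in -\boldsymbol\Psi(Cx)$, whereas a selection contributes to $\dot{\overline V}_F(x)$ only if the corresponding direction $f$ makes $\langle v, f\rangle$ constant over the whole (possibly set-valued) gradient $\partial V(x)$. This is exactly why $\dot{\overline V}_F(x)$ may be a strict subset of $\{\langle v, f\rangle : v \in \partial V(x),\, f \in F(x)\}$, or even empty, and why only an upper bound is claimed. The transpose and sign bookkeeping relating $w'$ and $u = -w'$ is routine.
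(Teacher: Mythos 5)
Your proof is correct and takes essentially the same route as the paper: both arguments hinge on pairing each direction $f = Ax + Bu \in F(x)$ with the Clarke subgradient generated by the \emph{same} selection $u \in -\boldsymbol\Psi(Cx)$, so that the matched inner product $(x^\top P - u^\top \Gamma C)(Ax+Bu)$ upper-bounds every element of $\dot{\overline V}_F(x)$. The only difference is that the paper concludes by citing Lemma~8 of \cite{KuraMulti21}, whereas you derive that step directly from definition \eqref{eq:LieD} (any $a \in \dot{\overline V}_F(x)$ equals $\langle v, f\rangle$ for \emph{all} $v \in \partial V(x)$, hence for the matched one), which makes the argument self-contained; your sign bookkeeping is also the correct one, as the matched subgradient must be $x^\top P - u^\top \Gamma C$ for $u \in -\boldsymbol\Psi(Cx)$, while the paper's proof writes $\rho(x,f):=Px + C^\top \Gamma u$ with an apparent sign slip.
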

\begin{proof}
For each $x\in \real^n$ and each element $f=Ax +Bu \in F(x)$ with $u \in - \boldsymbol\Psi(Cx)$, as in \eqref{sys_reg_eq}, denote $\rho(x,f):=Px + C^\top \Gamma  u$ and note that $\rho(x,f) \in  \partial V(x)$. Notice that $\rho$ and $f$ are defined by selecting the same $u \in - \boldsymbol\Psi(Cx)$. In view of Lemma~8 in \cite{KuraMulti21}, exploiting this selection we have that $\sup\dot {\overline V}_F(x) \leq\sup\limits_{\substack{u \in -\boldsymbol\Psi(Cx)}}(\rho(x,f)^\top f)$, thus concluding the proof.
\end{proof} \\
\indent Exploiting \eqref{eq:LieD} and Lemma \ref{lem:LieDSingleton_Lure}, we can establish the next algebraic Lyapunov conditions for system \eqref{sys_reg_eq}. 
\begin{proposition}
	\label{prop:prop_V}
	Consider $F$ in \eqref{sys_reg_eq} and suppose that Assumptions~\ref{ass:dec_nl} and \ref{ass:strict_pass} hold.
	Then there exist $\alpha_1,\alpha_2,\alpha_3 \in \K_\infty$ such that function $V$ in \eqref{Lyap_V_original_nl} satisfies
	\begin{align}
	\label{eq:sendwitch_squared_nl}
	&\alpha_1(|x|) \leq V(x) \leq \alpha_2(|x|), \quad \forall x\in \real^n,\\
	\label{eq:prop_V_flow}
	&\sup\dot{\overline{V}}_F(x) \leq  -  \alpha_3(V(x)), \quad \forall x\in \real^n.
	\end{align}   
\end{proposition}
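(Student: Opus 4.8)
The plan is to establish the two bounds in Proposition~\ref{prop:prop_V} separately. For the sandwich bound \eqref{eq:sendwitch_squared_nl}, I would first handle the quadratic part: since $P=P^\top>0$ from Assumption~\ref{ass:strict_pass}, we have $\tfrac{1}{2}\lambda_{\min}(P)|x|^2 \le U_1(x) \le \tfrac{1}{2}\lambda_{\max}(P)|x|^2$. For the integral part $U_2(Cx)$, the sector condition \eqref{eq:sec_cond} forces each integrand $\psi_i(\sigma)$ to have the same sign as $\sigma$ (since $\psi_i(\sigma)(\psi_i(\sigma)-\zeta_i\sigma)\le 0$ implies $\psi_i(\sigma)\sigma \ge \zeta_i^{-1}\psi_i(\sigma)^2 \ge 0$ when $\zeta_i<\infty$, and $\psi_i(\sigma)\sigma\ge 0$ directly when $\zeta_i=+\infty$). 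Hence each $\int_0^{C_ix}\psi_i(\sigma)\,d\sigma \ge 0$, so $U_2(Cx)\ge 0$ and the lower bound $\alpha_1$ comes from $U_1$ alone. For the upper bound, the same sector inequality gives $\psi_i(\sigma)^2 \le \zeta_i\psi_i(\sigma)\sigma \le \zeta_i|\psi_i(\sigma)||\sigma|$, hence $|\psi_i(\sigma)|\le \zeta_i|\sigma|$, so $\int_0^{C_ix}\psi_i(\sigma)\,d\sigma \le \tfrac{1}{2}\zeta_i(C_ix)^2$, yielding a quadratic upper bound $\alpha_2$. (When $\zeta_i=+\infty$ one uses instead local boundedness of the piecewise continuous $\psi_i$ together with radial unboundedness arguments, or absorbs it into a class-$\K_\infty$ bound.)

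The core of the proof is the decrease condition \eqref{eq:prop_V_flow}. Here I would invoke Lemma~\ref{lem:LieDSingleton_Lure} to reduce the set-valued Lie derivative to the algebraic quantity
\[
\sup_{u\in-\boldsymbol\Psi(Cx)} \big((x^\top P - u^\top\Gamma C)(Ax+Bu)\big).
\]
Fix any $u\in-\boldsymbol\Psi(Cx)$ and write $y=Cx$. The plan is to expand this expression and recognize it as the passivity supply-rate combination. Using the dissipation inequality $M\le 0$ from \eqref{eq:diss_sys1}, the quadratic form $\bigl[\begin{smallmatrix}x\\u\end{smallmatrix}\bigr]^\top M \bigl[\begin{smallmatrix}x\\u\end{smallmatrix}\bigr]\le 0$ expands to bound $x^\top(PA+A^\top P)x + 2x^\top PBu + \eta|x|^2$ against the cross terms involving $(C+\Gamma CA)$, $Z$, and $\Gamma CB$. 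After collecting, the terms that are not manifestly negative should combine with the sector condition \eqref{eq:sec_cond}: specifically the term $-2u^\top Zu = -2\sum_i\zeta_i^{-1}u_i^2$ matches the $\zeta_i^{-1}\psi_i^2$ piece, and $u^\top y = -\sum_i\psi_i(y_i)y_i$ feeds the sector product. The sector inequality \eqref{eq:sec_cond}, written as $u_i(u_i+\zeta_i y_i)\le 0$ for each $i$ (since $u_i=-\psi_i(y_i)$), supplies exactly the nonnegative slack needed to dominate the leftover indefinite terms, leaving
\[
(x^\top P - u^\top\Gamma C)(Ax+Bu) \le -\tfrac{\eta}{2}|x|^2.
\]
Since this bound is uniform over $u\in-\boldsymbol\Psi(Cx)$, the supremum is bounded by $-\tfrac{\eta}{2}|x|^2$ as well.

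The final step is to convert $-\tfrac{\eta}{2}|x|^2$ into $-\alpha_3(V(x))$ for some $\alpha_3\in\K_\infty$. Using the upper bound $V(x)\le\alpha_2(|x|)$ from \eqref{eq:sendwitch_squared_nl}, we have $|x|^2\ge (\alpha_2^{-1}(V(x)))^2$, so setting $\alpha_3(s):=\tfrac{\eta}{2}(\alpha_2^{-1}(s))^2$ gives a class-$\K_\infty$ function satisfying \eqref{eq:prop_V_flow}; one checks $\alpha_3\in\K_\infty$ because $\alpha_2^{-1}\in\K_\infty$ and squaring preserves the class. I expect the main obstacle to be the bookkeeping in the algebraic expansion: correctly matching each term of $(x^\top P - u^\top\Gamma C)(Ax+Bu)$ to the blocks of $M$ and verifying that the residual indefinite contributions are precisely those annihilated by the sector condition. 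The subtlety is that the multiplier structure $\mathcal{M}(s)=I+\Gamma s$ introduces the $\Gamma CA$ and $\Gamma CB$ cross-terms in the output equation of $\Sigma_1$, and one must confirm the Lie-derivative expression reproduces exactly the strict-passivity supply rate $-u^\top\overline y - \tfrac{\eta}{2}|x|^2$ after accounting for the $Zu$ term absorbed via the sector bound.
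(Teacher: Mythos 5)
Your proposal is correct and follows essentially the same route as the paper: both invoke Lemma~\ref{lem:LieDSingleton_Lure}, rewrite $(x^\top P - u^\top\Gamma C)(Ax+Bu)$ as $\tfrac{1}{2}\smallmat{x\\u}^\top M \smallmat{x\\u} + u^\top(Zu+Cx) - \tfrac{\eta}{2}|x|^2$, kill the first term by $M\le 0$ (Assumption~\ref{ass:strict_pass}) and the second by the sector condition \eqref{eq:sec_cond}, and then set $\alpha_3(s):=\tfrac{\eta}{2}(\alpha_2^{-1}(s))^2$. The only cosmetic difference is in \eqref{eq:sendwitch_squared_nl}, where you construct the bounds explicitly (eigenvalues of $P$ plus the sector estimates on the integral terms), while the paper simply observes that $V$ is continuous, positive definite and radially unbounded and cites \cite[Lemma 4.3]{NonlinearKhalil}.
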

\begin{proof}
  From \eqref{eq:sec_cond} and \eqref{Lyap_V_original_nl}, $V$ is positive definite, continuous on $\real^n$ and radially unbounded. Therefore, \eqref{eq:sendwitch_squared_nl} holds by \cite[Lemma 4.3]{NonlinearKhalil}. Let  $x\in \real^n$, we have from Lemma~\ref{lem:LieDSingleton_Lure} that
	\begin{align}
	\nonumber
	\sup\dot{\overline{V}}_F(x) \leq \sup\limits_{\substack{u \in - {\boldsymbol{\Psi}}(Cx)}} & \Big [ (x^\top P - u^\top \Gamma C)(Ax + Bu)\\ 
	\nonumber
	&-  u^\top(-Z u - C x) - \frac{\eta}{2} |x|^2 \\
	\label{eq:LieVmid1}
	&+  u^\top (-Z u - C x) +\frac{\eta}{2} |x|^2 \Big ],
	\end{align}
with $Z$ as in   \eqref{sys_lin_reg}. Therefore,
	\begin{align}
	\label{eq:Vdotfinpwc}
	\sup\dot{\overline{V}}_F(x) \leq   &\sup\limits_{\substack{u \in - {\boldsymbol{\Psi}}(Cx)}}
	\left(\frac{1}{2} \begin{bmatrix}	x\\
	u \end{bmatrix}^\top  
	M 
	\begin{bmatrix}	x\\
	u \end{bmatrix}   
	+ u^\top(Z u + C x) \right)- \frac{\eta}{2}|x|^2,
	\end{align}
	with $M$ as in \eqref{eq:diss_sys1}. 
In view of  Assumption~\ref{ass:dec_nl}, it  holds that  $u^\top(Z u + C x) \leq 0$ for all $u\in - \boldsymbol\Psi (Cx)$ and any $x \in \real^n$, as $\boldsymbol\Psi(Cx)$ is convex. Moreover, $M\leq 0$ in view of  Assumption~\ref{ass:strict_pass}. Therefore, we have from \eqref{eq:Vdotfinpwc}
	\begin{align}
    \nonumber
	\sup\dot{\overline{V}}_F(x) & \leq  \sup\limits_{\substack{u \in - {\boldsymbol{\Psi}}(Cx)}} ( u^\top(Z u + C x))- \frac{\eta}{2} |x|^2\\\label{eq:vfin}
	&\leq -\frac{\eta}{2} |x|^2 \leq -\frac{\eta}{2}(\alpha^{-1}_2(V(x)))^2 =: - \alpha_3(V(x)),
	\end{align}
	with $\alpha_3 \in \K_\infty$, which shows \eqref{eq:prop_V_flow} and the proof is complete.   
\end{proof}
\indent Based on Proposition~\ref{prop:prop_V} we provide an algebraic proof of Theorem~\ref{thm:KLbound}.

\begin{proofname}{Proof of Theorem~\ref{thm:KLbound}.}
Let  $x$ be a solution to \eqref{sys_reg_eq}. In view of \cite[Prop. 4]{KuraMulti21} and \cite[Lemma 2.20]{DellaRossaPhD21}, $V$ is non-pathological, and thus \cite[Lemma 2.23]{DellaRossaPhD21} ensures that $\frac{d}{d t}V(x(t)) \in  \dot {\overline V}_F(x(t))$ for almost all $t \in \dom x$. Hence, in view of  \eqref{eq:prop_V_flow} in Proposition~\ref{prop:prop_V}, we have that
	\begin{equation} 
	\label{eq:flow}
	\dot{V}(x(t)) \leq - \alpha_3(V(x(t))), \quad \text{for almost all \,}  t \in \dom x.
	\end{equation}
	By following the steps of the proof of \cite[Lemma A.4]{SontagIOS}, we have that $\dom x=\real_{\geq 0}$ and  there exists $\overline\beta \in \KL$ (independent of $x$) such that 
	\begin{equation}
	\label{eq:kl_v}
	    V(x(t))\leq \overline{\beta}(V(x(0)),t), \quad \forall t \in \real_{\geq 0}. 
	\end{equation}
	Equations \eqref{eq:sendwitch_squared_nl} and \eqref{eq:kl_v} imply
	    $|x(t)|  \leq \alpha_1 ^{-1}(V(x(t))) 
	     \leq \alpha_1 ^{-1}(\overline{\beta}(\alpha_2(|x(0)|),t))=:\beta(|x(0)|,t)$ for any $t \in \real_{\geq 0}$, with $\beta\in\KL$, thus concluding the proof.
\end{proofname}
\indent With the help of Theorem~\ref{thm:KLbound}, we can now establish that the origin of the system in Example~\ref{rem:Clarke_problem_pt1} is GAS.
\begin{example}
 \label{rem:Clarke_problem_pt2}
  The system in Example~\ref{rem:Clarke_problem_pt1} satisfies both Assumptions~\ref{ass:dec_nl} and \ref{ass:strict_pass} with the given selections of $Z$, $\Gamma$ and $P$. As a result $x=\boldsymbol{0}_2$ is GAS in view of Theorem~\ref{thm:KLbound}. It is instructive to see how the notion of set-valued Lie derivative helps overcoming the issue highlighted in Example~\ref{rem:Clarke_problem_pt1}. In particular, the set-valued Lie derivative of $V$ with respect to $F$ at $x=(0,\tfrac{1}{2})$ is the empty set. Indeed, for each $f \in F(0,\tfrac{1}{2})$ and any two different directions $v_1,v_2 \in \partial V (0,\tfrac{1}{2})$ with $v_1 \neq v_2$, we have $\langle f,v_1 \rangle \neq \langle f,v_2 \rangle$, thus there exists no $a \in \real$ satisfying the condition in \eqref{eq:LieD}. More specifically, given $F(0,\tfrac{1}{2})=\big[-\tfrac{3}{2},-\tfrac{1}{4}\big] \times \big\{-\tfrac{1}{2}\big\}$ and $\partial V(0,\tfrac{1}{2})=[-\tfrac{1}{4},1] \times \big\{\tfrac{1}{2}\big\}$, by selecting $v_1, v_2 \in \partial V(0,\tfrac{1}{2})$ with $v_1\neq v_2$, and $f \in F(0,\tfrac{1}{2})$  we have that $\langle f,v_1 \rangle=f_1 v_{1,1} -\tfrac{1}{4}$ and $\langle f,v_2 \rangle =f_1 v_{2,1} -\tfrac{1}{4}$ with $f_1 \in \big[-\tfrac{3}{2},-\tfrac{1}{4}\big]$ and $v_{1,1} \neq v_{2,1} \in [-\tfrac{1}{4},1]$. Therefore, $\langle f,v_1 \rangle=\langle f,v_2 \rangle$ if and only if $f_1(v_{1,1}-v_{2,1})=0$, which is impossible for the specified selection of $f$, $v_1$ and $v_2$. Hence, there exists no $a\in \real$  and $f \in F(0,\tfrac{1}{2})$ such that $\langle f,v \rangle=a$ for all $v \in \partial V (0,\tfrac{1}{2})$, thus implying that $\dot{\overline{V}}_F(0,\tfrac{1}{2})=\emptyset$. Besides this specific illustrative analysis, by exploiting Lemma~\ref{lem:LieDSingleton_Lure} we may actually show that $\sup\dot{\overline{V}}_F(x)<0$ for all $x\in \real^2 \setminus \{\boldsymbol 0_2 \}$. Indeed, we have that $\sup\dot{\overline{V}}_F(x)\leq \sup\limits_{\substack{u \in - {\boldsymbol \Psi}(x_1)}}(-x_1^2+2u x_1-x_2^2 +u x_2 -u^2)=\sup\limits_{\substack{u \in - {\boldsymbol \Psi}(x_1)}}\Big (-x_1^2+2u x_1 -\Big(\frac{1}{2}x_2 -u\Big)^2 -\frac{3}{4}x_2^2\Big)< 0$, because Assumption~\ref{ass:dec_nl} implies $u x_1 < 0$.
We, therefore, obtain that the supremum of the set-valued Lie derivative of $V$ with respect to $F(x)$ is strictly negative outside the origin, which was not possible to prove using the conservative upper bound \eqref{eq:Vex}.
\end{example}

We explain in the next remark why $\Sigma_2$ in \eqref{sys_nonlin_reg} is passive.
\begin{remark}
\label{rem:sys2_passive}
    System $\Sigma_2$ is passive from $\overline{y}$ to $-u$ as discussed at the end of Section~III. Let $g$ as in \eqref{sys_nonlin_reg}, we have that, by Lemma 8  in \cite{KuraMulti21} 
	\begin{equation}
	\label{eq:flow_passive}
\sup\dot{\overline{U}}_{2,g}(y)		\leq \sup\limits_{\substack{u \in - {\boldsymbol{\Psi}}(y)}}(u^\top (y+Zu) -  u^\top \overline  y), \quad \forall y\in \real^p.
\end{equation}
Let $y$ be a solution of \eqref{sys_lin_reg} with input $\overline y$ and output $u$. From \cite[Lemma 2.23]{DellaRossaPhD21}, $\frac{d}{dt}{U}_2(y(t)) \leq \sup\dot{\overline{U}}_{2,g}(y)$. Then, since $u^\top (Zu+y)\geq 0$ by Assumption \ref{ass:dec_nl} and \eqref{eq:flow_passive},  
\begin{equation}
\label{eq:diss_diff_sys2}
    \frac{d}{dt}{U}_2(y(t)) \leq \sup\limits_{\substack{u \in - {\boldsymbol{\Psi}}(y(t))}} -u^\top \overline y(t), \quad \text{for almost all } t\in \dom y.
\end{equation}
Hence, we obtain $U_2(y(t)) \leq  \, U_2(y(0)) + \int_{0}^{t} -u(s)^\top y(s)\,ds$ for all $t \geq 0$ with $t\in \dom y$ by integrating \eqref{eq:diss_diff_sys2}. Thus $\Sigma_2$ is passive from $\overline{y}$ to $-u$ as per Definitions~2.1 and 2.2 in \cite{sepulchre2012constructive}. Notice that, even if Definitions~2.1 and 2.2 are stated for single-valued outputs, we can apply these same definitions without loss of generality to our case where $u \in - \boldsymbol{\Psi}(y)$.
\end{remark}

We conclude this section with a discussion about how the conditions of Theorem~\ref{thm:KLbound} can be extended to.

\subsection{Extension under special properties of plant \eqref{sys_non_reg}}

Since the conditions in Theorem \ref{thm:KLbound} are only sufficient, we may prove that the origin is GAS for system \eqref{sys_reg_eq} via Lyapunov analysis by exploiting additional structural properties of matrices $A$, $B$ and $C$ in \eqref{sys_reg_eq}. A set of alternative exploitable properties for system \eqref{sys_reg_eq} is given next. 
\begin{prop}
\label{prop:alt_prop}
The following holds for system \eqref{sys_reg_eq}.
\begin{enumerate}[label=(\roman*)]
\item  Assumption \ref{ass:dec_nl} is satisfied.
\item  There exist matrices $\Gamma>0$ diagonal, $P=P^\top>0$ and a scalar $\eta>0$ such that
\begin{equation}
	\label{eq:diss_sys1n}
	\overline M:=\begin{bmatrix}
	P A + A^\top P + \eta I_n &
	P B \\
	B^\top P&
	-2Z - \Gamma C B - (\Gamma C B)^\top 
	\end{bmatrix} \leq 0.
\end{equation}
\item  There exist $H:=\diag(h_1, \dots, h_p)$  such that $\Gamma C A = HC$ and, for all $i \in \{1,\dots,p\}$, either $h_i \leq -1$ holds, or $h_i \leq 0$ and $Z=O_p$ holds, with $Z$ in \eqref{sys_lin_reg}. \hfill $\Box$
\end{enumerate}
\end{prop}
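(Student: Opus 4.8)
The plan is to establish the three items of the statement separately, since it asserts that all of (i)--(iii) hold for the given system \eqref{sys_reg_eq}. Items (ii) and (iii) are existence claims, so they require exhibiting admissible data $(\Gamma,P,\eta)$ and a diagonal $H$, whereas (i) is a direct check of the sector structure of the feedback nonlinearity. I would therefore treat (i) as a verification step and (ii)--(iii) as a coupled construction.

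First I would dispatch item (i). This is nothing but Assumption~\ref{ass:dec_nl} restated for \eqref{sys_reg_eq}, so establishing it amounts to verifying, componentwise, that each $\psi_i$ is piecewise continuous in the sense of Section~\ref{sec:notation} and that the quadratic sector inequality \eqref{eq:sec_cond} holds for some $\zeta_i\in(0,+\infty]$. For the decentralized nonlinearity $\boldsymbol\psi$ of \eqref{sys_non_reg} this reduces to a scalar inspection of each $\psi_i$, after which the Krasovskii regularization $\boldsymbol\Psi$ inherits the same sector bound on its convex-hull values; this step also fixes the matrix $Z=\diag(\zeta_1^{-1},\dots,\zeta_p^{-1})$ that enters (ii)--(iii).

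Next I would handle (ii) and (iii) jointly, since they are coupled through $\Gamma$. I would parametrize the diagonal gain $\Gamma=\diag(\gamma_1,\dots,\gamma_p)>0$ and impose the structural constraint of (iii), namely $\Gamma CA=HC$ with $H=\diag(h_1,\dots,h_p)$: this is a linear system in the entries of $\Gamma$ and $H$, and solving it (when consistent) yields the candidate $h_i$, after which I would check the sign requirements, i.e.\ $h_i\leq -1$, or else $h_i\leq 0$ together with $Z=O_p$ (the latter branch forcing all $\zeta_i=+\infty$ in the data produced at step (i), so that the sign bookkeeping couples back to the sector constants). With $\Gamma$, $H$ and $Z$ fixed, item (ii) becomes the convex feasibility of the linear matrix inequality \eqref{eq:diss_sys1n} in the unknowns $P=P^\top>0$ and $\eta>0$, which I would certify either directly by semidefinite programming or by a Kalman--Yakubovich--Popov/passivity argument applied to $\dot x=Ax+Bu$, $y=Cx$ with storage $\tfrac12 x^\top P x$. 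I note that $\overline M$ in \eqref{eq:diss_sys1n} coincides with $M$ in \eqref{eq:diss_sys1} except that its off-diagonal block drops the term $-(C+\Gamma CA)^\top$; under (iii) one has $C+\Gamma CA=(I_p+H)C$, which explains why this particular bundle of properties is the relevant one to verify.

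The main obstacle I anticipate is the compatibility of the structural constraint $\Gamma CA=HC$ with a \emph{diagonal} $H$: this couples $A$, $C$ and $\Gamma$ and need not admit a solution unless the rows of $CA$ are suitably aligned with those of $C$. Consequently, the single choice of $\Gamma$ must simultaneously render $H$ diagonal, enforce the sign conditions of (iii) (in the correct branch relative to the $\zeta_i$ found in (i)), and keep the LMI \eqref{eq:diss_sys1n} feasible with $P>0$ and $\eta>0$. Reconciling these three requirements in one admissible data set is the delicate part; once they are met, the positivity and sign checks underlying (i)--(iii) follow by routine linear-algebra and SDP-feasibility arguments.
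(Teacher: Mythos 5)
There is a genuine problem here, but it is one of logical status rather than of technique. The statement you set out to prove is not a theorem: in this paper the environment \texttt{prop} is declared as a \emph{Property}, and Property~\ref{prop:alt_prop} is a named bundle of hypotheses, playing exactly the same role as Assumptions~\ref{ass:dec_nl} and~\ref{ass:strict_pass}. The paper offers no proof of it; it is introduced (``a set of alternative exploitable properties \ldots is given next'') only to be \emph{assumed} as the hypothesis of Lemma~\ref{lem:alt_gas} (``Suppose that system \eqref{sys_reg_eq} satisfies items (i)--(iii) of Property~\ref{prop:alt_prop}\ldots'') and later \emph{verified} for one concrete instance. Despite the misleading phrasing ``The following holds for system \eqref{sys_reg_eq},'' the items cannot hold for a generic Lur'e system, and the obstacle you flag in your final paragraph is precisely why: the constraint $\Gamma CA=HC$ with diagonal $H$ is in general unsolvable, and, even more simply, the $(1,1)$ block of \eqref{eq:diss_sys1n} forces $PA+A^\top P\leq -\eta I_n$ with $P>0$ and $\eta>0$, so item (ii) is infeasible for every non-Hurwitz $A$. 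Hence no proof of the statement as a universal claim exists, and your proposal --- which at bottom is a feasibility-checking recipe, not a derivation --- correctly cannot be closed into one. The right reading, which you should have adopted once you noticed the compatibility obstruction, is that (i)--(iii) are conditions to be certified case by case.

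That said, your recipe is essentially the computation the paper performs where the property actually \emph{is} established for a concrete system, namely Lemma~\ref{lem:cnn_props} for the cellular neural networks of Section~\ref{sec:neuralnet}. There, the structural data ($A$ diagonal Hurwitz, $B$ Lyapunov diagonally stable, $C=I_n$, $Z=O_p$) make your ``delicate reconciliation'' trivial: a diagonal $\Gamma>0$ is chosen with $\Gamma B+(\Gamma B)^\top>0$ and $\Gamma A\leq -I_n$, so that $H:=\Gamma A$ is diagonal and $\Gamma CA=\Gamma A=HC$ holds automatically (your worry about the rows of $CA$ aligning with those of $C$ disappears because $C=I_n$); item (ii) is then settled not by a black-box SDP but by an explicit construction, $P=(\alpha S_\circ)^{-1}$ with $S_\circ A^\top+AS_\circ<0$ from the Lyapunov equation and $\alpha$ large enough, followed by a small $\eta$. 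So your two-stage plan (fix $\Gamma,H$ from structure, then certify the LMI in $(P,\eta)$) matches the paper's verification pattern; your error was only in treating the Property itself as something provable in the abstract.
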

The conditions in items (ii) and (iii) in Property~\ref{prop:alt_prop} impose extra properties of the matrices $C$ and $A$ (item (ii)) and a different matrix inequality compared to \eqref{eq:diss_sys1} (item (iii)), indeed as the off-diagonal terms of $\overline M$ differ from those in $M$ in \eqref{eq:diss_sys1}. We show in the next lemma that Property~\ref{prop:alt_prop} implies GAS of the origin for system \eqref{sys_reg_eq}. We will invoke this extension in Section~\ref{sec:lure_app} to analyze the stability properties of the neural networks studied in \cite{forti2003global}.
\begin{lem}
\label{lem:alt_gas}
Suppose that system \eqref{sys_reg_eq} satisfies items (i)-(iii) of Property~\ref{prop:alt_prop}. Then the origin is GAS for system \eqref{sys_reg_eq}. \hfill $\Box$
\end{lem}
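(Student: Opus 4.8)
The plan is to reproduce the two-step structure behind Theorem~\ref{thm:KLbound}: first establish the sandwich bound \eqref{eq:sendwitch_squared_nl} and the set-valued Lie derivative decrease \eqref{eq:prop_V_flow} for the same Lur'e--Postnikov function $V$ in \eqref{Lyap_V_original_nl} (now with $P$ and $\Gamma$ taken from item~(ii) of Property~\ref{prop:alt_prop}), and then invoke verbatim the $\KL$ argument at the end of the proof of Theorem~\ref{thm:KLbound}. The bound \eqref{eq:sendwitch_squared_nl} is obtained exactly as in Proposition~\ref{prop:prop_V}, since it relies only on item~(i) (Assumption~\ref{ass:dec_nl}) together with continuity and radial unboundedness of $V$; Assumption~\ref{ass:strict_pass} played no role there. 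Hence the whole novelty is confined to re-deriving \eqref{eq:prop_V_flow} from items~(ii)--(iii) in place of \eqref{eq:diss_sys1}.

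For the decrease I would start from Lemma~\ref{lem:LieDSingleton_Lure} and expand the bounding expression as $(x^\top P - u^\top \Gamma C)(Ax+Bu) = x^\top PAx + x^\top PBu - u^\top \Gamma CAx - u^\top \Gamma CBu$ for $u \in -\boldsymbol\Psi(Cx)$. The identity $\Gamma CA = HC$ in item~(iii) lets me rewrite the cross term as $-u^\top \Gamma CAx = -u^\top HCx$, which is precisely the discrepancy between the off-diagonal blocks of $\overline M$ in \eqref{eq:diss_sys1n} and of $M$ in \eqref{eq:diss_sys1}. Adding and subtracting $\tfrac{\eta}{2}|x|^2$ and $u^\top Zu$ then completes the quadratic form associated with $\overline M$, giving $(x^\top P - u^\top \Gamma C)(Ax+Bu) = \tfrac12 (x,u)^\top \overline M (x,u) - u^\top HCx + u^\top Zu - \tfrac{\eta}{2}|x|^2$. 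Since $\overline M \leq 0$ by item~(ii), the quadratic form is nonpositive and can be dropped, leaving $\sup \dot{\overline V}_F(x) \leq \sup_{u \in -\boldsymbol\Psi(Cx)} \big(-u^\top HCx + u^\top Zu\big) - \tfrac{\eta}{2}|x|^2$.

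The main obstacle, and the step that genuinely exploits item~(iii), is showing that the residual cross term is nonpositive, i.e. $-u^\top HCx + u^\top Zu = \sum_{i=1}^p \big(-h_i u_i y_i + \zeta_i^{-1} u_i^2\big) \leq 0$ with $y = Cx$. I would argue componentwise using the decentralized sector bound $\zeta_i^{-1}u_i^2 + u_i y_i \leq 0$ implied by Assumption~\ref{ass:dec_nl} (as already used in Proposition~\ref{prop:prop_V}), which in particular forces $u_i y_i \leq 0$. If $h_i \leq -1$, then $-h_i \geq 1$ and $u_i y_i \leq 0$ give $-h_i u_i y_i \leq u_i y_i$, so $-h_i u_i y_i + \zeta_i^{-1}u_i^2 \leq u_i y_i + \zeta_i^{-1}u_i^2 \leq 0$; if instead $h_i \leq 0$ and $Z = O_p$, the summand reduces to $-h_i u_i y_i \leq 0$ since $-h_i \geq 0$ and $u_i y_i \leq 0$. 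Either branch makes every term nonpositive.

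Combining the above yields $\sup\dot{\overline V}_F(x) \leq -\tfrac{\eta}{2}|x|^2$, and bounding $|x|^2$ through the upper estimate in \eqref{eq:sendwitch_squared_nl} exactly as in \eqref{eq:vfin} gives $\sup\dot{\overline V}_F(x) \leq -\tfrac{\eta}{2}(\alpha_2^{-1}(V(x)))^2 =: -\alpha_3(V(x))$ with $\alpha_3 \in \K_\infty$, i.e. \eqref{eq:prop_V_flow}. The GAS conclusion then follows by repeating the final $\KL$ argument in the proof of Theorem~\ref{thm:KLbound} without change.
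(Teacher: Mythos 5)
Your proposal is correct and follows essentially the same route as the paper's proof: the same completion of the quadratic form associated with $\overline M$ via the identity $\Gamma CA = HC$, the same componentwise treatment of the residual term $u^\top(Zu - HCx)$ using the sector condition and the two cases of item~(iii), and the same concluding $\KL$ argument borrowed from Theorem~\ref{thm:KLbound}. The only cosmetic difference is that you dispose of the cross term directly from $-h_i \geq 1$ and $u_i y_i \leq 0$, whereas the paper routes the same inequality through an auxiliary parameter $\alpha_i = -1 - h_i \geq 0$; the two arguments are equivalent.
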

\begin{proof}
Let $x\in \real^n$ and consider $V$ in  \eqref{Lyap_V_original_nl}. We have from Lemma~\ref{lem:LieDSingleton_Lure} that, for any $x\in \real^n$, 
	\begin{align}
	\nonumber
	\sup\dot{\overline{V}}_F(x) \leq \sup\limits_{\substack{u \in - {\boldsymbol{\Psi}}(Cx)}} &\Big[(x^\top P - u^\top \Gamma C)(Ax + Bu)\\ 
	\label{eq:LieVmid1n}
	&-  u^\top Z u  - \frac{\eta}{2} |x|^2 	+  u^\top Z u  + \frac{\eta}{2} |x|^2 \Big].
	\end{align}
with $Z$ as in   \eqref{sys_lin_reg}. Therefore,
	\begin{align}
	\label{eq:Vdotfinpwcn}
	\sup\dot{\overline{V}}_F(x) \leq   &\sup\limits_{\substack{u \in - {\boldsymbol{\Psi}}(Cx)}}
	\left(\frac{1}{2} \begin{bmatrix}	x\\
	u \end{bmatrix}^\top  
	\overline M 
	\begin{bmatrix}	x\\
	u \end{bmatrix}   
	+ u^\top(Z u - H C x) \right )- \frac{\eta}{2}|x|^2.
	\end{align}
We note that, in view of  Assumption~\ref{ass:dec_nl}, it  holds that  $u^\top(Z u - HCx) \leq 0$ for all $u\in - \boldsymbol\Psi (Cx)$ and any $x \in \real$. Indeed, because each entry of $\boldsymbol\Psi (y)=\boldsymbol\Psi (Cx)$  is convex for any $x\in \real^n$, by \eqref{eq:sec_cond} it holds that $\alpha_i u_i y_i \leq - \frac{\alpha_i}{\zeta_i} u_i ^2$, and $ u_i y_i + \frac{1}{\zeta_i} u_i ^2 + \alpha_i u_i y_i - \alpha_i u_i y_i \leq  0$,  for any $\alpha_i \geq 0$, $u \in -\boldsymbol\Psi (Cx)$ and  $i\in\{1,\dots,p\}$. Therefore,  we have $ \frac{1}{\zeta_i} u_i ^2 + (1+\alpha_i) u_i y_i \leq  \alpha_i u_i y_i \leq - \frac{\alpha_i}{\zeta_i} u_i ^2 \leq0$. When $h_i \leq -1$, taking $\alpha_i=-1 - h_i \geq 0$, we deduce that, for any  $u \in -\boldsymbol\Psi (Cx)$ and  $i\in\{1,\dots,p\}$, $\frac{1}{\zeta_i} u_i ^2 - h_i u_i y_i \leq 0$ and thus $u^\top(Z u - HCx) \leq  0$ for all $u\in - \boldsymbol\Psi (Cx)$. In the particular case where $Z=O_p$, $-u^\top HCx \leq 0$ is true for any negative semidefinite matrix diagonal $H$ by \eqref{eq:sec_cond_infty}, for all  $u \in -\boldsymbol\Psi (Cx)$ and $i\in\{1,\dots,p\}$. Moreover, we assumed $\overline M\leq 0$ in item (ii) of Lemma~\ref{lem:alt_gas}. Therefore, similar to \eqref{eq:vfin}, from \eqref{eq:Vdotfinpwcn} we have 
	\begin{align}
    \label{eq:newvdiff}
	\sup\dot{\overline{V}}_F(x) & \leq -\frac{\eta}{2} |x|^2 \leq -\frac{\eta}{2}(\alpha^{-1}_2(V(x)))^2 =:  - \overline \alpha_3(V(x)),
	\end{align}
	with $\overline\alpha_3 \in \K_\infty$. Then, as anticipated, by exploiting \eqref{eq:sendwitch_squared_nl} and \eqref{eq:newvdiff}, and following similar steps of those in the proof of Theorem~\ref{thm:KLbound}, we conclude that the origin is GAS for system \eqref{sys_reg_eq}.    
\end{proof}
\section{Finite-time stability}
\label{sec:ftp}
\subsection{Definitions and assumptions} 
 In this section, we provide conditions to guarantee output and state finite-time stability properties for system \eqref{sys_reg_eq}. In particular, we consider the next stability notions, see \cite{SontagLOS00,zimenko2021necessary}.
\begin{definition2}
\label{def:output_stability}
Consider system \eqref{sys_reg_eq}. If its solutions are all forward complete\footnote{A solution is forward complete if its domain is unbounded \cite{angeli1999forward}.}, then we say that the system is:
\begin{enumerate}[label=(\roman*)]
    \item \label{def:oGAS}  \emph{output globally asymptotically stable} (oGAS) if there exists $\beta \in \KL$ such that for any solution $x$  
$$|y(t)| \leq \beta(|x(0)|, t), \qquad \forall t\in \real_{\geq 0};$$
    \item \label{def:SIoLAS}  \emph{state-independent output locally asymptotically stable} (SIoLAS) if there exist $r>0$  and $\beta \in \KL$ such that for all solution $x$,
	$$  |x(0)|<r \Rightarrow |y(t)| \leq \beta(|y(0)|, t), \qquad \forall t\in \real_{\geq 0};$$
	\item \label{def:OFTS}  \emph{output finite-time stable} (OFTS) if it is oGAS and for each solution $x$ there exists $0 \leq  T < +\infty$ such that $y(t)=\boldsymbol0_p$ for all $ t\geq T$;
	\item \label{def:SFTS}  \emph{state finite-time stable} (SFTS) if the origin is GAS and for each solution $x$ there exists $0 \leq  T < +\infty$ such that $x(t)=\boldsymbol0_n$ for all $ t\geq T$. \hfill $\square$
\end{enumerate}
\end{definition2}

To be able to prove the output stability properties in Definition~\ref{def:output_stability}, we make the next assumption. 
\begin{assumption2}                                     \label{ass:ofts}
	The following holds.
	\begin{enumerate}[label=(\roman*)]
	\item   Matrix  $CB$ is \emph{Lyapunov diagonally stable}  (LDS) \cite[Def. 5.3]{hershkowitz1992recent}, i.e., there exists a diagonal matrix $\overline \Gamma>0$ of appropriate dimensions such that $\overline \Gamma CB+(CB)^\top \overline \Gamma>0$.
	
	\item The origin is GAS for system \eqref{sys_reg_eq}.
	
	\item Each $\psi_i$, with $i\in \{1, \dots, p\}$, is discontinuous at the origin and both its left and right limits are non-zero, i.e., for any $i\in \{1, \dots, p\}$ $\lim\limits_{s \rightarrow 0^+} \psi_i (s) >0$  and $\lim\limits_{s \rightarrow 0^-} \psi_i (s) <0$. \hfill $\square$
	\end{enumerate}
\end{assumption2}
Item~(i) of Assumption~\ref{ass:ofts} imposes extra conditions on the matrices $C$ and $B$ of system~\eqref{sys_non_reg}. Sufficient conditions to ensure item~(ii) of Assumption~\ref{ass:ofts} are provided in Theorem~\ref{thm:KLbound} and Lemma~\ref{lem:alt_gas}. Finally, item (iii) of  Assumption~\ref{ass:ofts} requires each $\psi_i, i \in \{1, \dots, p\}$, to be non-zero at the origin and to have non-zero left and right limit at zero as well. Examples of engineering systems satisfying Assumption~\ref{ass:ofts} (as well as Assumptions~\ref{ass:dec_nl} and \ref{ass:strict_pass}) are provided in Section \ref{sec:lure_app}.
\subsection{Output and state finite-time stability}
We are now ready to present the main result of this section,  whose proof is given in Section~\ref{subsec:Proof_Thm2}. 
\begin{theorem}
\label{thm:OFTS}
	Consider system  \eqref{sys_reg_eq} and suppose that Assumptions \ref{ass:dec_nl} and \ref{ass:ofts} hold, then system \eqref{sys_reg_eq} is OFTS and SIoLAS.     
\end{theorem}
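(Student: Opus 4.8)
The plan is to establish the two claimed properties (OFTS and SIoLAS) by analyzing the output dynamics directly, exploiting the Lyapunov-diagonally-stable (LDS) structure of $CB$ together with the sign conditions on the discontinuous nonlinearities at the origin. The key observation is that the output $y = Cx$ satisfies $\dot y \in CAx - CB\,\boldsymbol\Psi(Cx)$, and the LDS property of $CB$ in item~(i) of Assumption~\ref{ass:ofts} is precisely what is needed to build a candidate Lyapunov-like function for $y$ (or equivalently for a weighted norm of $y$) whose set-valued Lie derivative is strictly bounded away from zero near $y=0$, forcing finite-time convergence of the output.

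\textbf{Step 1 (output Lyapunov function and its decrease).} First I would introduce the candidate function $W(y) := \tfrac12 y^\top \overline\Gamma y$ where $\overline\Gamma>0$ is the diagonal matrix from the LDS condition, and compute (a bound on) its set-valued Lie derivative along the output dynamics $\dot y \in CAx - CB\,\boldsymbol\Psi(Cx)$, using Lemma~\ref{lem:LieDSingleton_Lure} applied to this auxiliary system. The dominant term is $-y^\top \overline\Gamma CB\,u$ with $u\in\boldsymbol\Psi(y)$, which by symmetrization equals $-\tfrac12 y^\top(\overline\Gamma CB + (CB)^\top\overline\Gamma)u$ and is rendered sign-definite by the LDS inequality. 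The remaining term $y^\top\overline\Gamma CAx$ is linear in $x$ and hence $O(|x|\,|y|)$.

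\textbf{Step 2 (finite-time decrease near the origin).} The crux is to show that near $y=0$ the decrease of $W$ is bounded above by a negative constant times a sublinear power of $W$, yielding finite-time convergence. Here item~(iii) of Assumption~\ref{ass:ofts} is essential: since each $\psi_i$ has a strictly positive right limit and strictly negative left limit at $0$, the selection $u_i\in\boldsymbol\Psi_i(y_i)$ stays uniformly bounded away from zero in magnitude (with the correct sign) as $y_i\to 0$, so the driving term $-y^\top(\overline\Gamma CB + (CB)^\top\overline\Gamma)u$ behaves like $-c\,|y|$ rather than $-c\,|y|^2$ near the origin. I would make this precise by showing $\sup \dot{\overline W} \le -\kappa\sqrt{W}$ on a neighborhood where $|y|$ is small, after absorbing the cross term $y^\top\overline\Gamma CAx$; this is exactly the differential inequality $\dot W \le -\kappa\sqrt{W}$ that integrates to reach $W=0$ in finite time.

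\textbf{Step 3 (stitching global attractivity and local finite-time, then SIoLAS).} For OFTS I would combine item~(ii) — GAS of the origin, which via Theorem~\ref{thm:KLbound} already gives a $\KL$ bound and hence oGAS with $|y|\le|C|\,|x|\le|C|\beta(|x(0)|,t)$ and also forward completeness — with the local finite-time bound of Step~2: GAS drives the state (hence the output) into the neighborhood where the finite-time estimate applies, and Step~2 then forces $y(t)=\boldsymbol0_p$ after some finite $T$. For SIoLAS I would localize the argument, using the cross-term bound to produce a $\KL$ estimate of $|y(t)|$ in terms of $|y(0)|$ alone once $|x(0)|<r$ ensures the trajectory remains in the region where $CAx$ can be controlled by the output (for instance when the dynamics restricted to small states keep $x$ proportionally bounded, or by invoking the GAS $\KL$ bound to confine $|x|$).

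\textbf{The main obstacle} I expect is Step~2: controlling the cross term $y^\top\overline\Gamma CAx$, which depends on the full state $x$ and not merely on the output $y$. Unlike the sign-definite friction-like term coming from $\boldsymbol\Psi$, this term need not vanish faster than $|y|$, so one cannot naively bound it by the finite-time-producing term. The delicate point is to argue that, along solutions, $|x|$ is itself controlled (via the GAS $\KL$ bound, so that after the transient $|x|$ is small and the term is dominated), and that the uniform sign/magnitude of $\boldsymbol\Psi$ near the origin guarantees the negative $-\kappa\sqrt{W}$ term dominates the cross term on the relevant time interval. Handling the interplay between the state's asymptotic decay and the output's finite-time decay — ensuring the output truly reaches zero and stays there rather than merely converging asymptotically — is where the careful estimation and the use of all three items of Assumption~\ref{ass:ofts} will be concentrated.
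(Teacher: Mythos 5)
Your Step 1 contains an algebraic error that breaks the whole argument in the multivariable case, and it is precisely the point where your route diverges from what can actually be made to work. With the quadratic choice $W(y)=\tfrac12 y^\top\overline\Gamma y$, the decrease term along the output dynamics is the \emph{bilinear} form $-y^\top\overline\Gamma CB\,u$, and your claimed identity $-y^\top\overline\Gamma CB\,u=-\tfrac12 y^\top(\overline\Gamma CB+(CB)^\top\overline\Gamma)u$ is false unless $\overline\Gamma CB$ is symmetric: the symmetrization identity $v^\top Mv=\tfrac12 v^\top(M+M^\top)v$ holds only when the two arguments of the form coincide. Consequently the LDS property, which asserts positive definiteness of $\overline\Gamma CB+(CB)^\top\overline\Gamma$ as a quadratic form in a \emph{single} vector, gives no sign information on $y^\top\overline\Gamma CB\,u$: the vectors $y$ and $u$ are related only componentwise in sign, and at $y_j=0$ the Krasovskii regularization together with item (iii) of Assumption~\ref{ass:ofts} makes $u_j$ range over a whole interval containing $0$ in its interior. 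Concretely, take $p\geq 2$ and $y$ with $y_1>0$ small and $y_j=0$ for $j\neq 1$: the diagonal contribution $-y_1\overline\gamma_1 (CB)_{11}u_1$ is favorable, but each off-diagonal contribution $-y_1\overline\gamma_1 (CB)_{1j}u_j$ with $j\neq 1$ can be made positive by an adversarial selection $u_j\in\boldsymbol\Psi_j(0)$ of magnitude bounded away from zero; since the supremum in the set-valued Lie derivative is taken over all such selections, it can be positive near $y=\boldsymbol 0_p$ whenever $CB$ has off-diagonal coupling (LDS does not imply the weighted diagonal dominance you would need). Hence neither the $-\kappa\sqrt W$ bound of Step 2 nor finite-time convergence follows; your computation happens to work only in the SISO case $p=1$.

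This is exactly why the paper does not use a quadratic function of the output: it uses the Lur'e-Postnikov-type function $W(Cx)=2\sum_{i}\overline\gamma_i\int_0^{C_ix}\psi_i(\sigma)\,d\sigma$ in \eqref{eq:aux_V}, whose generalized gradient is proportional to $C^\top\overline\Gamma\,\boldsymbol\Psi(Cx)$. Applying Lemma~\ref{lem:LieDSingleton_Lure} with $P=0$ and $\Gamma=\overline\Gamma$ (so that the gradient selection and the dynamics selection use the \emph{same} $u$), the key term becomes $-u^\top(\overline\Gamma CB+(CB)^\top\overline\Gamma)u\leq-\lambda_1|u|^2$, a genuine quadratic form in $u$ to which LDS applies, plus a cross term $2|\overline\Gamma CA||x||u|$ that is absorbed exactly as you suggest, by confining $x$ to a small ball via GAS. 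Combined with Lemma~\ref{lem:const_nl} ($|u|\geq c>0$ near $y=\boldsymbol 0_p$, from the discontinuity at the origin), this yields the \emph{constant} negative bound $\sup\dot{\overline{W}}_F(Cx)\leq-c\omega$ off $\ker(C)$, which integrates to finite-time convergence directly, with no need for a $\sqrt{W}$-type differential inequality. Finally, your Step 3 leaves unproved that the output stays at zero after reaching it (forward invariance of $\mu\mathbb{B}_n\cap\ker(C)$); the paper establishes this by a separate contradiction argument using positive definiteness, continuity and non-pathology of $W$, and acknowledging this as the place "where the careful estimation will be concentrated" does not fill that gap.
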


Theorem~\ref{thm:OFTS} establishes output finite-time stability properties for system~\eqref{sys_reg_eq}. A natural question is then whether \emph{state} finite-time stability properties can also be guaranteed. An answer to this question is given in the next theorem which establishes that, whenever Assumptions~\ref{ass:dec_nl} and \ref{ass:ofts} are satisfied, system~\eqref{sys_reg_eq} is SFTS if and only if $C$ is invertible.  
\begin{theorem}
	\label{thm:FTS}
	Consider  system  \eqref{sys_reg_eq} and suppose that Assumptions \ref{ass:dec_nl} and \ref{ass:ofts} are verified. Then the system is SFTS if and only if matrix $C$ is inveritble. 
\end{theorem}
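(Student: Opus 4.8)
The plan is to prove the two implications separately, using Theorem~\ref{thm:OFTS} (applicable since Assumptions~\ref{ass:dec_nl} and \ref{ass:ofts} hold) to control the output, and the GAS of the origin granted by item~(ii) of Assumption~\ref{ass:ofts} to control the state. I would first record a structural fact forced by item~(i): Lyapunov diagonal stability of $CB$ makes $CB$ nonsingular (if $CBv=\boldsymbol0_p$ then $v^\top(\overline\Gamma CB+(CB)^\top\overline\Gamma)v=0$, contradicting positive definiteness unless $v=\boldsymbol0_p$), whence $\rank C=p$ and $p\le n$. Therefore $C$ is invertible if and only if $p=n$, if and only if $\ker C=\{\boldsymbol0_n\}$, while $C$ fails to be invertible precisely when $\dim\ker C=n-p>0$.

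For the ``if'' direction, suppose $C$ is invertible. By Theorem~\ref{thm:OFTS} the system is OFTS, so every solution satisfies $y(t)=Cx(t)=\boldsymbol0_p$ for all $t\ge T$ for some finite $T$; inverting $C$ yields $x(t)=C^{-1}y(t)=\boldsymbol0_n$ for all $t\ge T$. Combined with the GAS of the origin from item~(ii) of Assumption~\ref{ass:ofts}, this gives SFTS at once. This direction is routine.

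For the ``only if'' direction I would argue the contrapositive: if $\ker C\neq\{\boldsymbol0_n\}$, the system is not SFTS, which I would show by exhibiting a single solution confined to $\ker C$ that reaches the origin only asymptotically. Two ingredients drive the construction. First, by item~(iii) of Assumption~\ref{ass:ofts} each $\boldsymbol\Psi_i(0)$ contains the interval between the nonzero, opposite-sign one-sided limits of $\psi_i$ at $0$, so $\boldsymbol0_p\in\mathrm{int}\,\boldsymbol\Psi(0)$. Second, since $CB$ is invertible, a solution can stay in $\ker C$ only with the unique velocity keeping $C\dot x=0$, namely $u=(CB)^{-1}CAx$, which produces the \emph{linear} zero dynamics $\dot x=A_0x$ with $A_0:=(I_n-B(CB)^{-1}C)A$; here $\ker C$ is $A_0$-invariant because $C(I_n-B(CB)^{-1}C)=0$, and for small $x$ this selection is admissible thanks to the interiority just noted.

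I would then show that $A_0|_{\ker C}$ is Hurwitz: an eigenvalue with nonnegative real part would yield, after suitable small scaling, a solution of $\dot x=A_0x$ in $\ker C$ that either stays bounded away from the origin or escapes a fixed neighbourhood, contradicting the GAS of item~(ii). With Hurwitzness secured, I pick $x_0\in\ker C\setminus\{\boldsymbol0_n\}$ small enough that the whole trajectory $x(t)=e^{A_0t}x_0$ remains in the region where $(CB)^{-1}CAx(t)\in\boldsymbol\Psi(0)$; then $x(\cdot)$ is a genuine Krasovskii solution of \eqref{sys_reg_eq} converging to the origin but, by invertibility of $e^{A_0t}$, never reaching it in finite time, so the system is not SFTS. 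I expect the main obstacle to lie exactly here, in certifying that this constructed curve is a bona fide solution of the differential inclusion \eqref{sys_reg_eq} for \emph{all} $t\ge0$: this requires both the interiority $\boldsymbol0_p\in\mathrm{int}\,\boldsymbol\Psi(0)$ and the Hurwitz property of the zero dynamics to keep the input selection admissible along the entire nonvanishing trajectory.
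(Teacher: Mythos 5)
Your proposal is correct and takes essentially the same route as the paper's proof: sufficiency follows from Theorem~\ref{thm:OFTS} plus invertibility of $C$, and necessity is obtained by exhibiting a zero-dynamics solution confined to $\ker(C)$ using the admissible selection $u=-(CB)^{-1}CAx\in-\boldsymbol\Psi(\boldsymbol{0}_p)$ (note the sign, which is a typo in your sketch), whose linear flow $e^{A_0 t}$ is invertible and hence never reaches the origin in finite time. The only difference is that the paper does not need your intermediate Hurwitz lemma: it invokes the GAS assumption (item~(ii) of Assumption~\ref{ass:ofts}) directly to confine the trajectory to the ball where the selection stays admissible, which is precisely the confinement your Hurwitz argument is designed to deliver.
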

\begin{proof}
We start  by proving that there exists $\varepsilon>0$ such that, for any $\xi\in \ker(C) \cap \varepsilon \mathbb{B}_n$, $u = - (CB)^{-1} CA \xi$  belongs to $\boldsymbol\Psi(\boldsymbol 0_p)$ and $CA\xi + CBu=\boldsymbol 0_p$. First, note that $CB$ is invertible as it is LDS by item (ii) of Assumption~\ref{ass:ofts}. Hence, for any $\xi\in \ker(C) \cap \varepsilon \mathbb{B}_n$, $u = - (CB)^{-1} CA \xi$ is well-defined and $CA\xi + CBu=\boldsymbol 0_p$. Secondly, in view of item (iii) of Assumption~\ref{ass:ofts}  there exists $\psi_\circ \in \real_{>0}$ such that $[-\psi_\circ,\psi_\circ]^p \subseteq \boldsymbol \Psi (\boldsymbol 0_p)$. Therefore, there exists $\varepsilon>0$ such that, for any $ \xi \in \ker(C) \cap \varepsilon \mathbb{B}_n$ and any $i \in \{1, \dots, p\}$, $|((CB)^{-1} CA)_i \xi| \leq \psi_\circ$, thus implying $u=-(CB)^{-1} CA \xi \in [-\psi_o,\psi_o]^p \subseteq \boldsymbol\Psi(\boldsymbol 0_p)$, as to be proven.

Now we are ready to prove the necessary and sufficient conditions of Theorem~\ref{thm:FTS}. The sufficient condition in Theorem~\ref{thm:FTS} is a direct consequence of Theorem~\ref{thm:OFTS}. We proceed by contradiction to prove the necessary condition in Theorem~\ref{thm:FTS}. We thus assume that $C$ is not invertible and consider $\varepsilon>0$ as at the beginning of this proof. Since for any $x\in \ker (C)\cap \varepsilon \mathbb{B}_n$ we can select $u = - (CB)^{-1} CA x$ that belongs to $\boldsymbol\Psi(\boldsymbol 0_p)$, we consider below solutions to \eqref{sys_reg_eq} satisfying
\begin{equation}
    \label{eq:particular_u}
    \dot x = Ax - B(CB)^{-1} CAx, \quad x\in \ker(C)\cap \varepsilon \mathbb{B}_n,
\end{equation} 
which implies
\begin{equation}
    \label{eq:particular_y}
    \dot y=C\dot x = (C A  - C B (CB)^{-1} CA) x=\boldsymbol 0_p, \quad x\in \ker(C) \cap \varepsilon \mathbb{B}_n.
\end{equation}
We now exploit \eqref{eq:particular_y} to attain a contradiction. By item (ii)  of Assumption~\ref{ass:ofts}, there exists $\delta>0$ such that any solution starting in $\delta \mathbb{B}_n$ does not leave $\varepsilon \mathbb{B}_n$ for all times. Let $x_p$ be a nonzero solution starting in $\ker(C) \cap \delta \mathbb{B}_n$, with output $y_p=C x_p$, which evolves according to \eqref{eq:particular_u} and \eqref{eq:particular_y}. Then  $y_p(0)=C x_p(0)=\boldsymbol 0_p$ and equation \eqref{eq:particular_y} imply  $y_p(t)=C x_p(t)=\boldsymbol 0_p$ and $\dot x_p (t) = (A - B(CB)^{-1} CA)x_p (t) \neq 0$ for all $t\geq 0$. As a consequence, $x_p$ exponentially converges to the origin but does not converge in finite-time. Such a solution establishes a contradiction, thus completing the proof.
\end{proof}
\indent We can now analyze the finite-time stability property of the system in Example~\ref{rem:Clarke_problem_pt1} in light of Theorems~\ref{thm:OFTS} and \ref{thm:FTS}.
\begin{example}
	 Consider the system in Example~\ref{rem:Clarke_problem_pt1}. Assumption \ref{ass:ofts} holds with $\overline \Gamma = 1$. As a result, the system is OFTS and SIoLAS. We also know from Theorem~\ref{thm:OFTS} that the system is not SFTS as $C$ is not invertible. Another way to see it is to consider $x(0)\in X:=\{0\} \times [-\frac{1}{4},\frac{1}{4}]$. A possible solution to \eqref{sys_reg_eq} is $x_p(t)=(0,x_2(0)e^{-t})$, which belongs to the set $X$ for all $t\geq 0$. Moreover, we have that $y_p(t)=0$ and $\dot y_p(t)=0$ for all $t\geq 0$. Clearly, $x_p$ converges exponentially to the origin, but not in finite-time.  
\end{example}
\subsection{Proof of Theorem~\ref{thm:OFTS}}
\label{subsec:Proof_Thm2}
The proof of Theorem~\ref{thm:OFTS} relies on the next lemma and proposition.
\begin{lemma}
	\label{lem:const_nl}
	Under Assumption~\ref{ass:dec_nl} and item~(iii) of Assumption~\ref{ass:ofts}, there exist $ \nu>0$ and $c>0$,  such that 
	\begin{equation}
	\label{eq:sec_cond_temp_2}
	 |u| \geq c,  \quad \forall u\in-\boldsymbol\Psi(y), \quad \forall y\in \nu \mathbb{B}_p \setminus \{\boldsymbol{0}_p\}.
	\end{equation}
\end{lemma}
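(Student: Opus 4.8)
The plan is to exploit the decentralized structure of $\boldsymbol\Psi$ together with item~(iii) of Assumption~\ref{ass:ofts} to bound one component of $u$ away from zero whenever the corresponding component of $y$ is nonzero and small. First I would use the existence of the nonzero one-sided limits $L_i^+:=\lim_{s\to 0^+}\psi_i(s)>0$ and $L_i^-:=\lim_{s\to 0^-}\psi_i(s)<0$ to produce, for each $i\in\{1,\dots,p\}$, scalars $\nu_i>0$ and $c_i>0$ such that $\psi_i(s)\geq c_i$ for all $s\in(0,\nu_i)$ and $\psi_i(s)\leq -c_i$ for all $s\in(-\nu_i,0)$. Concretely, taking $\varepsilon=\tfrac{1}{2}\min(L_i^+,|L_i^-|)$ in the definition of the two one-sided limits yields such $\nu_i$ together with $c_i:=\varepsilon$. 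Note that this bound holds uniformly on the punctured interval $(-\nu_i,\nu_i)\setminus\{0\}$ regardless of any further discontinuities of $\psi_i$ there, since it follows solely from the behaviour at the origin.

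The second step is to transfer this pointwise bound to the Krasovskii regularization $\boldsymbol\Psi_i$. Fix $y_i$ with $0<|y_i|<\nu_i$, say $y_i>0$. For every $s\in\big(0,\min(y_i,\nu_i-y_i)\big)$ the interval $y_i+s\mathbb{B}=[y_i-s,y_i+s]$ is contained in $(0,\nu_i)$, hence $\psi_i(y_i+s\mathbb{B})\subseteq[c_i,\infty)$ and therefore $\overline{\text{co}}\,\psi_i(y_i+s\mathbb{B})\subseteq[c_i,\infty)$, the latter set being closed and convex. Intersecting over $s>0$ then gives $\boldsymbol\Psi_i(y_i)\subseteq[c_i,\infty)$; the case $y_i<0$ is symmetric and yields $\boldsymbol\Psi_i(y_i)\subseteq(-\infty,-c_i]$. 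In either case every element of $\boldsymbol\Psi_i(y_i)$ has magnitude at least $c_i$. The key point here, and the only delicate one, is that for $y_i\neq 0$ the small neighborhoods used in the Krasovskii construction never reach the origin, so the jump of $\psi_i$ at $0$ is invisible to $\boldsymbol\Psi_i(y_i)$ and cannot drag its values back toward zero.

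Finally I would assemble the global constants. Set $c:=\min_{i}c_i>0$ and fix any $\nu\in(0,\min_{i}\nu_i)$. Given $y\in\nu\mathbb{B}_p\setminus\{\boldsymbol{0}_p\}$, there is an index $j$ with $y_j\neq 0$, and since $|y_j|\leq|y|\leq\nu<\nu_j$ the previous step applies: every $w\in\boldsymbol\Psi_j(y_j)$ satisfies $|w|\geq c_j\geq c$. Because $\boldsymbol\Psi$ is decentralized, any $u\in-\boldsymbol\Psi(y)$ has $u_j\in-\boldsymbol\Psi_j(y_j)$, whence $|u|\geq|u_j|\geq c$, which is precisely \eqref{eq:sec_cond_temp_2}. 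I expect the only genuine obstacle to be the second step, namely verifying rigorously that the Krasovskii regularization preserves both the sign and the uniform lower bound of $\psi_i$ away from the origin; the first and third steps are elementary consequences of the definition of the one-sided limits and of the decentralized structure.
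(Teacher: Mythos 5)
Your proof is correct and follows essentially the same route as the paper's: use the nonzero one-sided limits (together with the sector/piecewise-continuity structure) to bound each $\psi_i$ away from zero on a punctured neighborhood of the origin, then pick a nonzero component $y_j$ and exploit the decentralized structure to get $|u|\geq|u_j|\geq c$. The only difference is that you spell out explicitly that the Krasovskii regularization $\boldsymbol\Psi_j(y_j)$ inherits the lower bound for $y_j\neq 0$ (since the intersection over shrinking neighborhoods never sees the jump at the origin), a step the paper's proof leaves implicit.
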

\begin{proof}
	In view of item~(iii) of Assumption~\ref{ass:ofts}, there exist positive parameters $\nu_\circ$ and $c$ such that, for each  $i\in \{1, \dots, p\}$, $\psi_i$ is continuous in the intervals $[-\nu_\circ,0)$ and $(0,\nu_\circ]$, and $\min(|\lim\limits_{s \rightarrow 0^+} \psi_i (s)|,|\lim\limits_{s \rightarrow 0^-} \psi_i (s)|)\geq 2 c$. Hence, there exists $\nu \in (0,\nu_\circ]$ such that, for any $i\in \{1, \dots, p\}$ and $s\in[-\nu,0)\cup(0,\nu]$, $|\psi_i(s)|\geq  c$. Therefore, we have that for any $y\in \nu \mathbb{B}_p \setminus \{\boldsymbol{0}_p\}$ there exists $i\in\{1, \dots, p\}$ such that $|u|\geq|u_i|\geq c$ for all $u\in-\boldsymbol\Psi(y)$ thus concluding the proof.
\end{proof}

We also invoke the next proposition, which states algebraic properties of a piecewise continuously differentiable function, which is similar to the one in \eqref{eq:storage2}
\begin{equation}
\label{eq:aux_V}
W(Cx):=2 \sum_{i=1}^{p} \overline \gamma_i \int_{0}^{C_i x} \psi_i(\sigma) d\sigma, \quad \forall x\in \real^n,
\end{equation}
where $\overline  \gamma_1, \dots, \overline  \gamma_p>0$ are positve parameters selected such that $\overline  \Gamma CB+(CB)^\top  \overline  \Gamma>0$, with $\overline  \Gamma=\diag(\overline \gamma_1, \dots, \overline \gamma_p)$, which exist by item~(i) of Assumption~\ref{ass:ofts}. Function $W$ enjoys the following properties.
\begin{proposition}
\label{prop:aux_V_prop}
Suppose that Assumption \ref{ass:dec_nl} and items (i) and (iii) of Assumption \ref{ass:ofts} hold. Given function $W$ in \eqref{eq:aux_V}, there exist $\mu\in(0,\nu]$, with $\nu$ as in Lemma~\ref{lem:const_nl}, and $\alpha_4, \alpha_5 \in \K_\infty$ such that
\begin{align}
\label{eq:aux_V_sandwitch}
&\alpha_4(|Cx|)\leq W(Cx)\leq \alpha_5(|Cx|), &&\quad \forall x \in \mu \mathbb{B}_n, \\
\label{eq:dot_aux_V}
&\sup\dot{\overline{W}}_F(Cx)\leq -c \omega, &&\quad \forall x \in \mu  \mathbb{B}_n \setminus \ker(C),
\end{align} 
 with $c$ as in Lemma~\ref{lem:const_nl}, $\omega:=\lambda_1  (c -2\mu\frac{\lambda_2}{\lambda_1})>0$, $\lambda_1$ is the smallest eigenvalue of  $\overline \Gamma CB+(CB)^\top \overline \Gamma$, and $\lambda_2:=|\overline \Gamma C A|$.
\end{proposition}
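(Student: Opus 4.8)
The plan is to establish the two displays separately, treating $x \mapsto W(Cx)$ as a local output-Lyapunov certificate for $y = Cx$.

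For the sandwich bound \eqref{eq:aux_V_sandwitch} I would work componentwise on each term $\int_0^{C_i x}\psi_i(\sigma)\,d\sigma$. The sector condition \eqref{eq:sec_cond} forces $\sigma\psi_i(\sigma)\ge 0$, so $\psi_i(\sigma)$ has the sign of $\sigma$ and each integral is nonnegative. Restricting $x$ to a ball small enough that $|C_i x|\le\nu$ for every $i$ (which holds as soon as $\mu\le\nu/|C|$), the pointwise bound $|\psi_i(\sigma)|\ge c$ for $\sigma\in[-\nu,0)\cup(0,\nu]$ extracted in the proof of Lemma~\ref{lem:const_nl} gives $\int_0^{C_i x}\psi_i(\sigma)\,d\sigma\ge c\,|C_i x|$, while piecewise continuity provides a finite bound $M_i:=\sup_{|\sigma|\le\nu}|\psi_i(\sigma)|$ and hence $\int_0^{C_i x}\psi_i(\sigma)\,d\sigma\le M_i|C_i x|$. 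Summing with the weights $2\overline\gamma_i$ and passing from the $\ell_1$ norm of $Cx$ to $|Cx|$ via the standard norm inequalities yields linear (hence $\K_\infty$) functions $\alpha_4,\alpha_5$ with $\alpha_4(|Cx|)\le W(Cx)\le\alpha_5(|Cx|)$ on $\mu\mathbb{B}_n$.

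For the decrease \eqref{eq:dot_aux_V} I would first observe that $x\mapsto W(Cx)$ has the same structural form as the integral part of $V$, so the argument behind Lemma~\ref{lem:LieDSingleton_Lure} (Lemma~8 in \cite{KuraMulti21}) applies verbatim: choosing, for each $f=Ax+Bu\in F(x)$ with $u\in-\boldsymbol\Psi(Cx)$, the consistent generalized-gradient selection $-2C^\top\overline\Gamma u\in\partial(W\circ C)(x)$, one obtains $\sup\dot{\overline{W}}_F(Cx)\le\sup_{u\in-\boldsymbol\Psi(Cx)}\big(-2u^\top\overline\Gamma CAx-2u^\top\overline\Gamma CBu\big)$. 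I would then bound the cross term by Cauchy--Schwarz using $|x|\le\mu$ and $\lambda_2=|\overline\Gamma CA|$, giving $-2u^\top\overline\Gamma CAx\le 2\lambda_2\mu|u|$, and symmetrize the quadratic term as $u^\top\overline\Gamma CBu=\tfrac12 u^\top(\overline\Gamma CB+(CB)^\top\overline\Gamma)u\ge\tfrac12\lambda_1|u|^2$ using item~(i) of Assumption~\ref{ass:ofts}, giving $-2u^\top\overline\Gamma CBu\le-\lambda_1|u|^2$. Hence $\sup\dot{\overline{W}}_F(Cx)\le\sup_{u\in-\boldsymbol\Psi(Cx)}\big(-\lambda_1|u|^2+2\lambda_2\mu|u|\big)$.

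To conclude, for $x\in\mu\mathbb{B}_n\setminus\ker(C)$ the output $Cx$ lies in $\nu\mathbb{B}_p\setminus\{\boldsymbol0_p\}$, so Lemma~\ref{lem:const_nl} forces $|u|\ge c$ for all $u\in-\boldsymbol\Psi(Cx)$. The scalar map $t\mapsto-\lambda_1 t^2+2\lambda_2\mu t$ is a downward parabola with vertex at $\lambda_2\mu/\lambda_1$; choosing $\mu$ strictly below $\lambda_1 c/(2\lambda_2)$ places this vertex to the left of $c$ and simultaneously makes $\omega=\lambda_1 c-2\lambda_2\mu>0$, so the map is decreasing on $[c,\infty)$ and attains its maximum over $\{t\ge c\}$ at $t=c$, with value $-c(\lambda_1 c-2\lambda_2\mu)=-c\omega$. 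Taking $\mu:=\min\{\nu,\nu/|C|\}$ and, if necessary, shrinking it below $\lambda_1 c/(2\lambda_2)$ satisfies every requirement at once. I expect the decrease estimate to be the delicate step: the lower bound $|u|\ge c$ from Lemma~\ref{lem:const_nl}, the LDS positivity of $\overline\Gamma CB+(CB)^\top\overline\Gamma$, and the smallness of $\mu$ must be balanced so that $-\lambda_1|u|^2$ dominates the linear cross term uniformly in $u$, which is exactly what the parabola argument and the explicit threshold on $\mu$ secure; the sandwich bound, by contrast, is routine.
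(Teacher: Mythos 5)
Your proof is correct and follows essentially the same route as the paper's: specializing Lemma~\ref{lem:LieDSingleton_Lure} with $P=0$ and $\Gamma=\overline\Gamma$, applying Cauchy--Schwarz and the LDS bound $u^\top(\overline\Gamma CB+(CB)^\top\overline\Gamma)u\ge\lambda_1|u|^2$, and invoking Lemma~\ref{lem:const_nl} to get $|u|\ge c$ and hence the $-c\omega$ bound. If anything you are more explicit than the paper about the constraints on $\mu$ (namely $\mu\le\nu/|C|$ so that $|Cx|\le\nu$, and $\mu<\lambda_1 c/(2\lambda_2)$ so that $\omega>0$), and your constructive linear sandwich bounds replace the paper's appeal to the standard positive-definiteness lemma \cite[Lemma 4.3]{NonlinearKhalil}; both are valid.
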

\begin{proof}
From \eqref{eq:aux_V} and Lemma \ref{lem:const_nl}, for any $x \in \mu  \mathbb{B}_n \setminus \ker(C)$, $W(Cx)>0$ while  $W(Cx)=0$  for any $x \in \ker(C)\cap \mu  \mathbb{B}_n$. Moreover, we have that $W$ is continuous on $\mu \mathbb{B}_n$. Therefore, \eqref{eq:aux_V_sandwitch} holds in view of \cite[Lemma 4.3]{NonlinearKhalil}. Let $x \in \mu  \mathbb{B}_n \setminus \ker(C)$, from Lemma~\ref{lem:LieDSingleton_Lure}, by imposing $P=0$ and $\Gamma=\overline \Gamma$  in \eqref{eq:supLieDnn}, we have
	\begin{align}
		\label{eq:dot_w_1}
		\sup\dot{\overline{W}}_F(Cx) \leq \sup\limits_{\substack{u \in - {\boldsymbol{\Psi}}(Cx)}} (- 2 u^\top \overline \Gamma C(Ax + B u)).
	\end{align}  
	Using the Cauchy–Schwarz inequality, we obtain
	\begin{align}
	\nonumber
	\sup\dot{\overline{W}}_F(Cx) \leq &\sup\limits_{\substack{u \in - {\boldsymbol{\Psi}}(Cx)}} (- u^\top(  \overline  \Gamma CB+(CB)^\top  \overline  \Gamma)u \\
	\label{eq:dot_w_2}
	&+ 2 |\overline \Gamma C A||x||u| ).
	\end{align} 
	Thus,  in view of item~(i) of Assumption \ref{ass:ofts}, we have that
	\begin{align}
	\nonumber
	\sup\dot{\overline{W}}_F(x) &\leq \sup\limits_{\substack{u \in - {\boldsymbol{\Psi}}(Cx)}} (- \lambda_1  |u|^2 + 2|\overline \Gamma C A||x||u| ),\\
	\nonumber
	&=\sup\limits_{\substack{u \in - {\boldsymbol{\Psi}}(Cx)}} (-(\lambda_1|u| -2|\overline \Gamma C A||x|)|u|),\\
		\label{eq:dot_w_3}
	&\leq \sup\limits_{\substack{u \in - {\boldsymbol{\Psi}}(Cx)}} \Big (-\lambda_1 \big (|u| -2\frac{\lambda_2}{\lambda_1}|x|\big )|u|\Big ).
	\end{align}
    Hence, in view of Lemma~\ref{lem:const_nl}, by selecting $\mu\in (0, \nu]$ we have that
	$\sup\dot{\overline{W}}_F(Cx) \leq \sup\limits_{\substack{u \in - {\boldsymbol{\Psi}}(Cx)}} (-\omega|u|)\leq	- c \omega,$
	where $\omega=\lambda_1\big (c -2\frac{\lambda_2 \mu}{\lambda_1}\big)>0$, thus concluding the proof.
\end{proof}

We are now ready to prove Theorem~\ref{thm:OFTS}. To prove the OFTS property of system \eqref{sys_reg_eq}, we proceed by steps. We first show that, for solutions to \eqref{sys_reg_eq} initialized in a neighborhood of the origin, the corresponding output converges to the origin in finite-time and then, leveraging the GAS property of the origin for \eqref{sys_reg_eq}, we prove OFTS of \eqref{sys_reg_eq}. 

\begin{proofname}{Proof of Theorem~\ref{thm:OFTS}.}
We start by proving that solutions initialized sufficiently close to the origin converge to $\ker(C)$ in finite time by integrating \eqref{eq:dot_aux_V}. To do so, we recall that, by the GAS property of the origin, there exists $\kappa>0$ such that solutions starting in $\kappa \mathbb{B}$ will not leave $\mu \mathbb{B}$, with $\mu$ as in Proposition~\ref{prop:aux_V_prop} and we note that the set $\mu\mathbb{B}_n \cap \ker (C)$ is forward invariant for any solution starting $\kappa\mathbb{B}_n \cap \ker (C)$. Indeed,  suppose that there exists a solution $x_{\text{bad}}$ to \eqref{sys_reg_eq} such that $x_{\text{bad}}(0)\in \kappa\mathbb{B}_n\cap\ker (C)  $ and  $x_{\text{bad}}(t^*)\notin \mu\mathbb{B}_n \cap \ker (C)  $ for some $t^*>0$ with $t^* \in \dom x_{\text{bad}}$. Since $x_{\text{bad}}$ is continuous with respect to the time, we can choose ${t}^*>0$ such that $ x_{\text{bad}} (t) \in \kappa\mathbb{B}_n \cap \ker (C)$ for all $t \in  [0,{t}^*)$ and $x_{\text{bad}} ({t}^*) \in \mu\mathbb{B}_n \setminus  \ker(C)$. Hence, from  \eqref{eq:aux_V} and \eqref{eq:dot_aux_V},  and from the fact that $W$ is positive definite on $\mu\mathbb{B}_n$ and non-pathological,  we have  $0=W(Cx_{\text{bad}} (t))<W(Cx_{\text{bad}} ({t}^*))$, for all $t \in  [0,{t}^*)$, which establishes a contradiction by the continuity property of $W$. Consequently, solutions cannot leave $ \mu\mathbb{B}_n \cap  \ker (C)$ after reaching the set $\kappa\mathbb{B}_n \cap \ker (C)$. Therefore, by combining \eqref{eq:dot_aux_V} with the fact that $W$ is non-pathological, and the forward invariance of $\mu \mathbb{B} \cap \ker (C)$ for solutions starting in $\kappa \mathbb{B} \cap \ker (C)$,  for any solution $x$ initialized so that $x(0)\in \kappa\mathbb{B}_n \setminus  \ker(C)$, we obtain by integration for any $t\in \dom x$ such that  $x(t) \in \mu\mathbb{B}_n \setminus  \ker(C)$
\begin{align}
	\label{eq:V_fin_time}
	\begin{split}
		W(Cx(t))\leq -c \omega t + W(Cx(0)),
	\end{split}
\end{align}
and thus 
\begin{align}
	    \nonumber
		W(Cx(t))\leq\max(-c \omega t &+ W(Cx(0)) ,0), \\ \label{eq:V_fin_time_bound} & \forall x(0)\in \kappa\mathbb{B}_n, \, \forall t \in \real_{\geq 0}.
\end{align}
Thus, in view of \eqref{eq:V_fin_time_bound} and by the GAS property of the origin, we conclude that, for any solutions starting in $\kappa \mathbb{B}$, there exists a $T_y$, depending on $\kappa$, such that $x(t) \in \kappa \mathbb{B} \cap \ker(C)$ for any $t\geq T_y$.  We now leverage the GAS property of the origin to prove that \eqref{sys_reg_eq} is OFTS. We recall that, for any solution $x$ to \eqref{sys_reg_eq}, by the GAS property of the origin there exists a time $T_\kappa \geq 0$ such that $x(t) \in \kappa \mathbb{B}$ for all $t \geq T_\kappa$. Therefore, we conclude that $y(t)=\boldsymbol 0_p$ for all $t \geq T:= T_\kappa + T_y$. We have proved that, for any solution $x$, there exists $T \geq 0$ such that $y(t)=\boldsymbol 0_p$, for all $t\geq T$. Moreover, system \eqref{sys_reg_eq} is oGAS because it is GAS from item (i) of Assumption~\ref{ass:ofts} and because $|y| \leq |C|  |x|$. Therefore, system  \eqref{sys_reg_eq} is OFTS.

Finally, we prove that system \eqref{sys_reg_eq} is also SIoLAS. Indeed, combining \eqref{eq:aux_V_sandwitch} and  \eqref{eq:V_fin_time_bound} yields, for any solution $x$ with $x(0) \in \kappa \mathbb{B}$,
\begin{align}
\nonumber
	|y(t)|&\leq \alpha_4^{-1}(\max(-c \omega t + \alpha_5(Cx(0)) ,0))\\   &=:\beta_\circ(|y(0)|,t), \quad \forall x(0)\in  \kappa\mathbb{B}_n, \quad \forall t \in \real_{\geq 0},
	\label{eq:KL_out_3}
\end{align}
with $\beta_\circ \in \KL$, thus ending the proof.
\end{proofname}
\section{Applications}
\label{sec:lure_app}
In this section, we present two applications of the results of Sections~\ref{sec:asy_stability} and \ref{sec:ftp}.
\subsection{Mechanical system affected by friction \cite{Eindhoven17}}
\label{subsec:friction}
Consider the rotor dynamic system with friction system given in \cite[Sec. 5]{Eindhoven17}, i.e.,
\begin{align}
  \!\!\begin{bmatrix}	
    \dot \alpha\\
    \dot \omega_u\\
	\dot \omega_\ell
 \end{bmatrix} 
 \!\!\in \!\! \label{eq:mech_dyn_og}
  \begin{bmatrix}	
	\omega_u - \omega_\ell \\
    -\frac{k_\theta}{J_u} \alpha  -\frac{b}{J_u}(\omega_u-\omega_\ell)-\frac{1}{J_u} T_{fu}(\omega_u) + \frac{k_u}{J_u} v \\
    \frac{k_\theta}{J_\ell} \alpha   + \frac{b}{J_\ell}(\omega_u-\omega_\ell)-\frac{1}{J_\ell} T_{f\ell} (\omega_\ell) 
 \end{bmatrix}, 
\end{align}
with $x=(\alpha,\omega_u,\omega_\ell) \in \real^3$, where $\alpha$ is the angular mismatch between two rotating discs connected by an angular spring and an angular dumper, and $\omega_u$ and $\omega_\ell$ are the angular velocities of these two discs. Scalars $J_u$, $J_\ell$, $k_u$, $k_\theta$ and $b$ are positive system parameters whose values are reported in Table~\ref{tab:params}.  The control input $v \in \real$ is used for state-feedback stabilization, while the set-valued maps $T_{fu}$ and $T_{f\ell}$ in \eqref{eq:mech_dyn_og} are defined as
\begin{align*}
    &T_{fu}(s) := \,\,
    \begin{cases}
	\, f_u(s)  \sign(s), \hspace{17.5 mm} \forall s \in \real \setminus \{0\} \\
	\, [-f_{u,\circ}+\Delta f_u,f_{u,\circ} +\Delta f_u], \quad  \text{otherwise}, 	
	\end{cases}  \\
	& f_u(s):= f_{u,\circ}+ \Delta f_u \sign(s) + q_1 |s|  +  q_2 s, \quad  \forall s \in \real,\\
    &T_{f\ell}(s) := \,\,
    \begin{cases}
	\, f_\ell(s)  \sign(s), \hspace{8 mm} \forall s \in \real \setminus \{0\} \\
	\, [-f_{\ell,\circ},f_{\ell,\circ}], \hspace{10.5 mm}  \text{otherwise}, 	
	\end{cases}  \\
	& f_\ell(s):= f_{\ell,\circ}+ (\Delta f_\ell - f_{\ell,\circ}) e^{-q_3|s|} + q_4 |s|, \quad  \forall s \in \real,
\end{align*}
 for suitable positive scalars $f_{u,\circ}$, $f_{\ell,\circ}$, $\Delta f_u$, $\Delta f_\ell$, $q_1$, $q_2$, $q_3$ and $q_4$ we give in Table~\ref{tab:params} and with function $\sign:\real \rightarrow [-1,1]$ defined as  $\sign(s)=1$ if  $s>0$, $\sign(s)=-1$ if  $s<0$, and $\sign(s)=0$ if $s=0$, and for which we have that \eqref{eq:sec_cond} is satisfied with $\zeta_1=\zeta_2=\infty$.

Like in \cite[Ch. 6]{doris2007output}, by considering the selection $v=v_p+v_{\text{lin}}$ in \eqref{eq:mech_dyn_og}, where $v_{p_1}:=K x$, $K=[k_1,k_2,k_3]\in \real^{3 \times 1}$ and $v_{\text{lin}}:=\frac{1}{k_u} T_{fu}(\omega_u)$, we obtain 
\begin{align}
\label{eq:mech_dyn_lin}
 \!\!\!\!\begin{bmatrix}	
    \alpha\\
     \omega_u\\
	 \omega_\ell
 \end{bmatrix} \!\!\!
  \in  \!\!
  \begin{bmatrix}	
	\!\omega_u - \omega_\ell \!\\
    \!-\frac{k_\theta}{J_u} \alpha \! -\! \frac{b}{J_u}(\omega_u\!-\omega_\ell)\! +\! \frac{k_u}{J_u} (k_1  \alpha + k_2 \omega_u + k_3 \omega_\ell)  \!\\
    \!\frac{k_\theta}{J_\ell} \alpha   + \frac{b}{J_\ell}(\omega_u-\omega_\ell)-\frac{1}{J_\ell} T_{f\ell} (\omega_l)\!
 \end{bmatrix}\!, 
\end{align}
which can be written in the Lur'e form \eqref{sys_reg_eq}, with $n=3$ and $p=1$, and $A=A_{\text{free}} + H_1 K +  H_2$,
\begin{equation*}
	A_{\text{free}}= \begin{bmatrix}
	0 &
	1 &
	-1\\
	-\frac{k_\theta}{J_u}  &
    -\frac{b}{J_u}&
	\frac{b}{J_u}\\
	\frac{k_\theta}{J_\ell}&
	\frac{b}{J_\ell}&
	-\frac{b}{J_\ell}
	\end{bmatrix}, \,
	H_1 K= \begin{bmatrix}
	0  &
    0  &
	0 \\
	\frac{k_u k_1}{J_u}&
    \frac{k_u k_2}{J_u}&
    \frac{k_u k_3}{J_u}\\
	0&
	0&
    0
	\end{bmatrix},   
\end{equation*}
$H_2=\diag(0, 0, \frac{m}{J_u})$ and $m\in \real$, $B=(0, 0, \frac{1}{{J_\ell}})$, $C=[0, 0, 1]$ and $\boldsymbol \Psi(C x)=\boldsymbol \Psi(\omega_\ell)=T_{f\ell} (\omega_\ell)+m \omega_\ell$.  

\begin{table}
\centering
\label{tab:params}
    \begin{tabular}{lll}
    \toprule
    $b$ &$[N m^2/\text{rad } s]$& 0  \\
    $f_{u,\circ}$  &$[\text{N } \text{m}]$ & 0.38 \tabularnewline 
    $\Delta f_u$ &$[\text{N }  \text{m}]$ & -0.006  \tabularnewline
    $f_{\ell,\circ}$ &$[\text{N }  \text{m}]$ & 0.0009  \tabularnewline
    $\Delta f_\ell$ &$[\text{N }  \text{m}]$ & 0.68  \tabularnewline
    $J_u$  &$[\text{kg } \text{m}^2]$ & 0.4765  \\ 
    $J_\ell$ &$[\text{kg } \text{m}^2]$ & 0.035 \\ 
    $k_u$ &$[\text{N }  \text{m}/\text{V}]$ & 4.3228   \tabularnewline
    $k_\theta$ &$[\text{N } \text{m}/\text{rad}]$ & 0.075  \tabularnewline 
    $q_1$ &$[\text{kg } \text{m}^2/\text{rad s}]$& 2.4245  \\ 
    $q_2$ &$[\text{kg } \text{m}^2/\text{rad s}]$& -0.0084  \\ 
    $q_3$ &$[s/\text{rad}]$& 0.05  \\ 
    $q_4$ &$[\text{kg } \text{m}^2/\text{rad s}]$& 0.26  \\
    \bottomrule
    \end{tabular}
    \caption{Parameters identifying the system given in \cite[Sec. 5]{Eindhoven17}.}
\end{table}

Assumption \ref{ass:strict_pass} is satisfied with the selection $m=0.052$, $\Gamma=\gamma_1=10$, $\eta=8.492$, $K=[-12.8282,3.7216,-8.4816]$ and
\begin{equation*}
	P= \begin{bmatrix}
	0.5636 &
	0.0340 &
	0.3793\\
	0.0340  &
    0.0062&
	0.0186\\
	0.3793&
	0.0186&
	0.2642
	\end{bmatrix}. \,   
\end{equation*}
Since Assumption~\ref{ass:dec_nl} is also satisfied, Theorem \ref{thm:KLbound} implies that the origin is GAS for \eqref{sys_reg_eq}, thus retrieving the result originally presented in \cite{Eindhoven17}. In addition, because Assumption \ref{ass:ofts} holds for the considered system, we establish here, from Theorem \ref{thm:OFTS}, that system \eqref{sys_reg_eq} is OFTS and SIoLAS, which is a novelty compared to \cite{Eindhoven17}.

\subsection{Cellular neural networks from \cite{forti2003global}}
\label{sec:neuralnet}
In \cite{forti2003global}, cellular neural networks are modeled  by system \eqref{sys_non_reg}  (see \cite[eq. (N1)-(N2)]{forti2003global}), where the system data satisfies the next property according to \cite[Prop. 3 and 4]{forti2003global}.
 \begin{prop}
 \label{ass:celneuralnet}
The following holds for system \eqref{sys_non_reg}.
 \begin{enumerate}[label=(\roman*)]
     \item  $A$ is a diagonal, negative definite matrix.
     \item  $B$ is LDS (as per Assumption~\ref{ass:ofts}).
     \item  $C=I_n$.
     \item  For any $i\in\{1,\dots,n\}$, function $\psi_i$ is nondecreasing, i.e., for any $a>b \in \dom \psi_i$ it holds that $\psi_i(a) \geq \psi_i(b)$, is piecewise continuous and satisfies Assumption~\ref{ass:dec_nl} with $\zeta_i=+\infty$ and item~(iii) of Assumption~\ref{ass:ofts}. \hfill $\square$
 \end{enumerate}    
 \end{prop}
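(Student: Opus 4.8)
The plan is to verify items (i)--(iv) by writing the cellular neural network of \cite{forti2003global} explicitly and matching it, term by term, to the Lur'e data $(A,B,C,\boldsymbol\psi)$ of \eqref{sys_non_reg}. First I would recall the model \cite[eq. (N1)-(N2)]{forti2003global}, in which each neuron obeys $\dot x_\ell = -d_\ell x_\ell + \sum_{j} t_{\ell j}\, g_j(x_j) + J_\ell$ with positive time constants $d_\ell>0$, interconnection weights $t_{\ell j}$, biases $J_\ell$ and neuron activations $g_j$. After translating the (isolated) equilibrium of interest to the origin and absorbing the constant bias accordingly, this reads as \eqref{sys_non_reg} with $A=\diag(-d_1,\dots,-d_n)$, $C=I_n$, $B$ equal to the (negated) interconnection matrix $T=(t_{\ell j})$, and $\boldsymbol\psi$ built from the shifted activations. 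This identification immediately yields item (iii), $C=I_n$, since in the CNN each neuron state is directly measured, and item (i), because the self-decay terms $d_\ell>0$ make $A$ diagonal and negative definite.

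For item (ii) I would invoke \cite[Prop. 3 and 4]{forti2003global}: the convergence hypotheses imposed there on the interconnection matrix are exactly a Lyapunov-diagonal-stability condition on $T$ (equivalently on $B$), so there is a diagonal $\overline\Gamma>0$ with $\overline\Gamma CB + (CB)^\top\overline\Gamma>0$ (recall $CB=B$ since $C=I_n$), which is the LDS property of item~(i) of Assumption~\ref{ass:ofts}. Item (iv) is checked componentwise on the shifted activations: monotonicity is preserved by the coordinate shift, so each $\psi_i$ remains nondecreasing and piecewise continuous; being nondecreasing with a jump at the origin whose one-sided limits have opposite signs gives $\psi_i(y_i)y_i\geq 0$ for all $y_i\in\real$, i.e.\ \eqref{eq:sec_cond_infty}, so Assumption~\ref{ass:dec_nl} holds with $\zeta_i=+\infty$; finally, that same sign-definite jump is precisely item~(iii) of Assumption~\ref{ass:ofts}.

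The main obstacle I anticipate is the bookkeeping of the coordinate shift: Forti's equilibrium is generically not at the origin and the bias $J_\ell$ is nonzero, so I must show that after shifting, each new activation $\psi_i$ simultaneously (a) retains monotonicity, (b) satisfies the sector condition \eqref{eq:sec_cond_infty} through the origin, and (c) keeps a sign-definite jump at the origin. For the \emph{discontinuous} (hard) neuron model used in \cite{forti2003global} to obtain finite-time behavior, this requires being careful that the shift places the discontinuity exactly at the zero of the argument and that the one-sided limits retain their signs; handling this alignment cleanly---rather than the otherwise routine matrix identifications $A=\diag(-d_\ell)$, $C=I_n$, and $B=-T$---is where the real work lies.
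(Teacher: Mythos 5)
The paper never proves Property~\ref{ass:celneuralnet}: it is introduced with the phrase ``the system data satisfies the next property according to \cite[Prop. 3 and 4]{forti2003global}'' and is then used purely as a standing hypothesis in Lemma~\ref{lem:cnn_props}. In other words, items (i)--(iv) have the status of modeling assumptions imported by citation---on a par with Assumptions~\ref{ass:dec_nl} and \ref{ass:ofts}---rather than of a result with an internal argument, so there is no paper proof to compare yours against. Your verification-by-identification ($A=\diag(-d_1,\dots,-d_n)$ diagonal negative definite, $C=I_n$, $B$ the negated interconnection matrix so that the LDS requirement on $B$ is exactly the diagonal-stability hypothesis placed on the interconnections in \cite{forti2003global}, and $\boldsymbol\psi$ the shifted activations) is the natural way to substantiate that citation, and it is consistent with what the paper intends.

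Two remarks on substance. First, your sector argument is correct: for a nondecreasing $\psi_i$ whose one-sided limits at $0$ are respectively positive and negative, every $s>0$ satisfies $\psi_i(s)\geq\lim_{t\to 0^+}\psi_i(t)>0$, and symmetrically for $s<0$, so $\psi_i(s)s\geq 0$ everywhere, which is \eqref{eq:sec_cond_infty}; hence Assumption~\ref{ass:dec_nl} with $\zeta_i=+\infty$ indeed follows from monotonicity plus the sign-definite jump. Second, the ``main obstacle'' you flag---showing that the equilibrium shift places a sign-definite discontinuity exactly at the zero of the argument---is not something that can be derived, and you should not try: whether each activation jumps at the equilibrium, with the equilibrium selection strictly inside the jump, depends on the data of the network and is precisely the hypothesis under which \cite{forti2003global} obtains finite-time convergence. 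Translating the equilibrium to the origin converts that hypothesis verbatim into item~(iii) of Assumption~\ref{ass:ofts}. So the step you anticipate as ``the real work'' is inherited as an assumption, not proved---which is exactly why the paper states Property~\ref{ass:celneuralnet} as a Property backed by a citation rather than as a lemma with a proof.
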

Property~\ref{ass:celneuralnet} trivially implies Assumption~\ref{ass:dec_nl} and  items (i) and (iii) of Assumption~\ref{ass:ofts}. We show below that it also implies item (ii) of Assumption~\ref{ass:ofts} so that we can invoke Theorems~\ref{thm:KLbound} and \ref{thm:OFTS} to prove GAS of the origin for system \eqref{sys_reg_eq} and that system \eqref{sys_reg_eq} is SFTS, thus providing alternative proofs of the stability results given in  \cite[Thm. 3 and 4]{forti2003global}. Indeed, we recall that, by proving stability properties for system \eqref{sys_reg_eq}, we ensure the same stability properties for the Krasovskii solutions of \eqref{sys_non_reg}.
\begin{lemma}
\label{lem:cnn_props}
Suppose that system \eqref{sys_non_reg} satisfies Property~\ref{ass:celneuralnet}. Then the origin is GAS for system \eqref{sys_reg_eq}, and system \eqref{sys_reg_eq} is SFTS.
\end{lemma}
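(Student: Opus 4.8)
The plan is to reduce the statement to the machinery already in place: obtain GAS through the extension Lemma~\ref{lem:alt_gas}, and then upgrade to the finite-time conclusions via Theorems~\ref{thm:OFTS} and \ref{thm:FTS}. First I would record how Property~\ref{ass:celneuralnet} feeds these results. Item~(iv) directly gives Assumption~\ref{ass:dec_nl} with $\zeta_i=+\infty$ (hence $Z=O_p$) and item~(iii) of Assumption~\ref{ass:ofts}; and since $C=I_n$ by item~(iii), the LDS hypothesis on $B$ in item~(ii) is exactly item~(i) of Assumption~\ref{ass:ofts}, because $CB=B$. It therefore only remains to certify GAS, i.e.\ item~(ii) of Assumption~\ref{ass:ofts}, and this is where the extension Lemma~\ref{lem:alt_gas} enters (it is the natural route rather than Theorem~\ref{thm:KLbound}, since the structural identity $\Gamma CA=HC$ removes the awkward off-diagonal term $(C+\Gamma CA)^\top$ of $M$).

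To apply Lemma~\ref{lem:alt_gas} I would verify the three items of Property~\ref{prop:alt_prop}. Item~(i) is Assumption~\ref{ass:dec_nl}, already available. For item~(iii) I would take $\Gamma=\overline{\Gamma}$, the diagonal LDS certificate of $B$, so that $\Gamma CA=\overline{\Gamma} A=:H$ is diagonal (a product of diagonal matrices, using $C=I_n$ and the diagonal $A$ of item~(i)); since $A$ is negative definite its diagonal entries $a_i$ are negative, hence $h_i=\overline{\gamma}_i a_i<0$, and together with $Z=O_p$ this meets the branch ``$h_i\leq 0$ and $Z=O_p$'' of item~(iii).

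The main obstacle is item~(ii): producing $P=P^\top>0$ and $\eta>0$ with $\overline M\leq 0$, where, after substituting $Z=O_p$, $C=I_n$, $\Gamma=\overline{\Gamma}$, one has $\overline M=\smallmat{PA+A^\top P+\eta I_n & PB\\ B^\top P & -\overline{\Gamma} B-B^\top\overline{\Gamma}}$. The bottom-right block equals $-Q$ with $Q:=\overline{\Gamma} B+B^\top\overline{\Gamma}>0$ by item~(ii) of Property~\ref{ass:celneuralnet}, so $Q$ is invertible and, by the strict Schur complement, it suffices to arrange $PA+A^\top P+\eta I_n+PBQ^{-1}B^\top P\leq 0$. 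I would try the scaled ansatz $P=\varepsilon I_n$: then $PA+A^\top P=2\varepsilon A\leq -2\varepsilon a_0 I_n$ with $a_0:=-\max_i a_i>0$, while $PBQ^{-1}B^\top P=\varepsilon^2 BQ^{-1}B^\top\leq \varepsilon^2 c_0 I_n$ with $c_0:=|BQ^{-1}B^\top|$. Choosing $\eta=\varepsilon a_0>0$, the Schur complement is bounded by $\varepsilon(-a_0+\varepsilon c_0)I_n$, which is negative semidefinite for every $\varepsilon\in(0,a_0/c_0]$ (any $\varepsilon$ if $c_0=0$); this yields $\overline M\leq 0$ and hence GAS through Lemma~\ref{lem:alt_gas}.

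With GAS established, all items of Assumption~\ref{ass:ofts} hold, so Theorem~\ref{thm:OFTS} gives OFTS (and SIoLAS), and since $C=I_n$ is invertible, Theorem~\ref{thm:FTS} promotes this to SFTS; together with the GAS already proved, this is exactly the claim. The only genuinely computational step is the LMI construction above, which the diagonal structure of $A$ and the identity $C=I_n$ make routine; everything else is a matching of hypotheses.
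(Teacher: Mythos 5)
Your proposal is correct, and its skeleton coincides with the paper's: verify items (i)--(iii) of Property~\ref{prop:alt_prop} (with $Z=O_p$, $C=I_n$, and $\Gamma$ a diagonal LDS certificate of $B$), invoke Lemma~\ref{lem:alt_gas} for GAS, and then conclude SFTS from Theorem~\ref{thm:FTS} since $C=I_n$ is invertible. The one place where you genuinely diverge is the construction certifying item (ii), i.e.\ $\overline M\leq 0$. The paper does \emph{not} exploit the diagonality of $A$ there: it only uses that $A$ is Hurwitz, factors $\widetilde M$ by the congruence $\diag(P,I_n)\,N\,\diag(P,I_n)$ with $S=P^{-1}$, solves a Lyapunov equation $S_\circ A^\top+AS_\circ=\Pi<0$, and scales $S=\alpha S_\circ$ with $\alpha$ large enough to dominate the off-diagonal block $B$; your argument instead takes the ansatz $P=\varepsilon I_n$ and a Schur complement, which is more elementary and leans directly on $A$ being diagonal negative definite (so $2\varepsilon A\leq -2\varepsilon a_0 I_n$). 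The trade-off: the paper's construction generalizes verbatim to any Hurwitz $A$, whereas yours is shorter, fully explicit in the constants $(\varepsilon,\eta)$, and avoids the slightly delicate eigenvalue/norm bookkeeping in the paper's scaling step (where, e.g., the condition on $\alpha^\star$ and the choice $0<\eta<-|\widetilde M|$ are stated somewhat loosely and really mean dominating the Schur complement of $N$ and taking $\eta$ below the smallest eigenvalue of $-\widetilde M$). A second, minor difference: the paper additionally normalizes $\Gamma$ so that $\Gamma A\leq -I_n$, while you correctly observe this is unnecessary because with $Z=O_p$ the branch ``$h_i\leq 0$ and $Z=O_p$'' of item (iii) of Property~\ref{prop:alt_prop} already applies.
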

\begin{proof}
We prove below that there exist matrices $\Gamma>0$ diagonal, $P=P^\top >0$ and a scalar $\eta>0$ satisfying \eqref{eq:diss_sys1n}. Since $B$ is LDS, there exists a $\Gamma>0$ diagonal such that $\Gamma B + (\Gamma B)^\top=:\Sigma>0$ and such that $\Gamma A \leq -I_n$. With this selection, we can rewrite matrix $\overline M$ in \eqref{eq:diss_sys1n} as, 
\begin{equation}
\label{eq:spr_mat}	
	\overline M=\begin{bmatrix}
	P A + A^\top P&
	PB \\
	B^\top P&
		-\Sigma 
	\end{bmatrix} + \diag(\eta I_n,\boldsymbol 0_n),
\end{equation}
noting that $Z$ is the null matrix due to item (iv) of Property~\ref{ass:celneuralnet}. Define \begin{align*}
\widetilde M &:=\begin{bmatrix}
	P A + A^\top P&
	PB \\
    B^\top P&
		-\Sigma 
	\end{bmatrix}\\
 &=\begin{bmatrix}
	P&
	0 \\
	0&
		I_n
	\end{bmatrix}
 \underbrace{\begin{bmatrix}
	S A^\top  + A S&
	B \\
	B^\top&
		-\Sigma 
	\end{bmatrix}}_{\hspace{2mm}=:N}
 \begin{bmatrix}
	P&
	0 \\
	0&
		I_n
	\end{bmatrix},
\end{align*} 
where $S=P^{-1}$. Since $A$ is Hurwitz by item (i) of Property~\ref{ass:celneuralnet}, there exists $S_\circ=S_\circ^\top>0$ such that $S_\circ A^\top  + A S_\circ= \Pi<0$. Therefore, by selecting $S=\alpha S_\circ$ with $\alpha>0$, to be chosen, we have that 
\begin{equation}
     N=\begin{bmatrix}
	\alpha \Pi&
	B \\
	B^\top&
		-\Sigma 
	\end{bmatrix} < 0 \quad \forall \alpha>\alpha^\star,
\end{equation}
where $\alpha^\star>0$ satisfies $-\alpha^\star \lambda_{\Pi}>|B \Sigma B^\top|$, with $\lambda_{\Pi}>0$ denoting the smallest eigenvalue of $\Pi$. Hence, with the given selection of $\alpha$ and $P$, matrix $N$ and thus $\widetilde M$ are negative definite. Therefore, by selecting $0<\eta<-|\widetilde M|$ we have that $\overline M \leq 0$ thus proving \eqref{eq:diss_sys1n} and item (ii) of Property~\ref{prop:alt_prop}. Consider now item (iii) of Property~\ref{prop:alt_prop} and note that matrix $H=\Gamma A < 0$ is diagonal negative definite and satisfies $\Gamma C A=\Gamma A = H = H C$. Since $Z=O_p$ due to item (iv) of Property \ref{ass:celneuralnet}, then item (iii) of Property~\ref{prop:alt_prop} holds and we can invoke Lemma~\ref{lem:alt_gas} to certify that the origin is GAS for system \eqref{sys_reg_eq} and render Property~\ref{ass:celneuralnet}. Furthermore, since Assumptions \ref{ass:dec_nl} and \ref{ass:ofts} hold, then, by Theorem \ref{thm:FTS}, system \eqref{sys_reg_eq} is also SFTS  because $C$ is invertible. 
\end{proof}
We envision applying our results to a broader class of neural networks with piecewise continuous activation functions. Due to the short length of a technical note submission, we do not pursue such generalizations here and we regard them as future work.
\section{Conclusion}
We have analyzed the stability of the origin for Lur'e systems with piecewise continuous nonlinearities. We have first established the global asymptotic stability of the origin under a milder sector condition compared to \cite{Eindhoven17} and by relying on a different, algebraic Lyapunov proof based on the concept of set-valued Lie derivative. We have then presented conditions under which finite-time stability properties can or cannot be established for the considered class of systems. These results have been applied to two engineering systems of interest: mechanical systems with friction and cellular neural networks.

Future research directions may include: systems affected by exogenous disturbances; weak stability analysis for the considered class of systems in the sense that only some solutions exhibit the desired stability properties; as well as the synchronization of interconnected Lur'e systems with piecewise continuous nonlinearities following the path of paved by \cite{tang2017finite,brogliato2009observer}. 
\bibliographystyle{plain}
\bibliography{refs}
\end{document}